  \providecommand*\input@path{}
  \newcommand\addinputpath[1]{
  \expandafter\def\expandafter\input@path
  \expandafter{\input@path{#1}}}
  \newcommand{\Rmnum}[1]{\expandafter\@slowromancap\romannumeral #1@}
\newcommand{\rank}{\textrm{Rank}}
 \journalname{Noname}
\begin{document}

\title{Two modifications for Loidreau's code-based cryptosystem
}


\author{Wenshuo Guo         \and
        Fang-Wei Fu 
}


\institute{ \Letter\ Wenshuo Guo \at
              Chern Institute of Mathematics and LPMC, Nankai University, Tianjin 300071, China  \\
              \email{ws\_guo@mail.nankai.edu.cn}           
           \and
           Fang-Wei Fu \at
             Chern Institute of Mathematics and LPMC, Nankai University, Tianjin 300071, China\\
              \email{fwfu@nankai.edu.cn} 
}

\date{Received: date / Accepted: date}

\maketitle

\begin{abstract}
This paper presents two modifications for Loidreau's code-based cryptosystem. Loidreau's cryptosystem is a rank metric code-based cryptosystem constructed by using Gabidulin codes in the McEliece setting. Recently a polynomial-time key recovery attack was proposed to break Loidreau's cryptosystem in some cases. To prevent this attack, we propose the use of subcodes to disguise the secret codes in Modification \Rmnum{1}. In Modification \Rmnum{2}, we choose a random matrix of low column rank over $\mathbb{F}_q$ to mix with the secret matrix. According to our analysis, these two modifications can both resist the existing structural attacks. Additionally, we adopt the systematic generator matrix of the public code to make a reduction in the public-key size. In additon to stronger resistance against structural attacks and more compact representation of public keys, our modifications also have larger information transmission rates.

\keywords{Code-based cryptography \and Rank metric codes \and Gabidulin codes \and Loidreau's cryptosystem}
 \subclass{ 94B05 \and 81Q99 }
\end{abstract}

\section{Introduction}\label{section1}
In 1978, McEliece proposed the first code-based public-key cryptosystem, namely the well-known  McEliece cryptosystem based on Goppa codes\cite{McEliece1978public}. Since then cryptologists have made extensive study on its security\cite{lee1988observation,canteaut1998cryptanalysis,kobara2001seman,faugere2016structural}. Apart from some weak keys\cite{loidreau2001weak}, the McEliece cryptosystem still remains secure in general cases. The main drawback of this cryptosystem lies in its large public-key size, which makes it unpractical in many situations. To overcome this problem, many variants have been proposed. In 1986, Niederreiter \cite{Niederreiter1986Knapsack} introduced a knapsack-type cryptosystem using GRS codes, which was shown to be insecure by Sidelnikov in\cite{Sidelnikov1992On}. But if we use Goppa codes in the Niederreiter setting, it was proved to be equivalent to the McEliece cryptosystem in terms of security\cite{xinmei1994on}. GRS codes allow us to reduce the public-key size due to their optimal error-correcting capability. Many variants based on GRS codes were proposed after Niederreiter's work. However, nearly all of these variants were broken one after another because of GRS codes being highly structured. In the variant\cite{Baldi2016Enhanced}, the BBCRS cryptosystem, the authors proposed the use of a dense matrix rather than a permutation matrix to disguise the structure of the underlying GRS code. In this proposal, the column scrambler is a matrix of the form $(R+T)^{-1}$, where $T$ is a sparse matrix and $R$ is a dense matrix of low rank. With this approach, the public code seems quite different from GRS codes. This variant therefore can resist some known structural attacks, such as the Sidelnikov-Shestakov attack \cite{Sidelnikov1992On}. However, in\cite{Couvreur2014Distinguisher} the authors presented a polynomial-time key recovery attack against this variant in some cases. Although we can adjust the parameters to prevent such an attack, it would bring some other problems such as the decryption complexity increasing exponentially and a higher request of error-correcting capability for the underlying code.

In 1985 Gabiduin \cite{gabidulin1985theory} introduced a new family of rank metric codes, known as the Gabidulin codes. Since the complexity of decoding general rank metric codes is much higher than that of decoding Hamming metric codes\cite{chabaud1996crypt,ourivski2002new}, it is feasible to obtain much smaller public-key sizes by building cryptosystems in the rank metric. In\cite{gabidulin1991ideals} the authors proposed to use Gabidulin codes in the McEliece setting and introduced the GPT cryptosystem. Unfortunately, several structural attacks were put forward to completely break this system\cite{Gibson1996The,2008Structural,horlemann2018extension}.To prevent these attacks, variants based on different masking skills for Gabidulin codes were proposed\cite{gabidulin2008attacks,gabidulin2009improving,rashwan2011security,loidreau2010designing,rashwan2010smart}. But in\cite{Otmani2018Improved} the authors declare the failure of all the previous masking techniques for Gabidulin codes. In \cite{faure2005new} Faure and Loidreau proposed a cryptosystem also relying on the Gabidulin codes but not in the McEliece setting. Until the work in \cite{gaborit2018polynomial}, the Faure-Loidreau system had never been severely attacked. Recently, in \cite{loidreau2017new} Loidreau proposed a cryptosystem constructed by using Gabidulin codes in the McEliece setting. Different from the original GPT cryptosystem, the isometric matrix is replaced with a matrix whose inverse is taken in an $\mathbb{F}_q$-subspace of $\mathbb{F}_{q^m}$ of dimension $\lambda$. By doing this, the public code seems quite random. Loidreau claimed that his proposal could prevent the existing structural attacks. However, this claim was proved to be invalid by the authors in \cite{coggia2020security} when $\lambda=2$ and the code rate is greater than $1/2$. Soon after this, the author in \cite{2020Extending} generalized this attack to the case of $\lambda>2$ and the code rate greater than $1-\frac{1}{\lambda}$. However, it is feasible to prevent this attack even when the secret code rate is greater than $1-\frac{1}{\lambda}$ according to our analysis.

The rest of this paper is organised as follows. In Section \ref{section2} notations and some concepts about rank metric codes used throughout this paper are given. Section \ref{section3} is devoted to a simple descripton of Loidreau's cryptosystem. In Section \ref{section4} we shall introduce part of the Coggia-Couvreur attack (please refer to \cite{coggia2020security} for more details). Following this, our two modifications for Loidreau's cryptosystem will be introduced in Section \ref{section5}, then security analysis of our modifications will be given  in Section \ref{section6}. In Section \ref{section7}, we will give some suggested parameters for different security levels and make a comparison with Loidreau's original scheme in Table \ref{table1} and with some NIST-PQC submissions in Table \ref{table2}. Section \ref{section8} is our conclusion.

\section{Preliminaries}\label{section2}
\subsection{Notations and basic concepts}
Let $q$ be a prime power. Denote by $\mathbb{F}_q$ the finite field with $q$ elements, and $\mathbb{F}_{q^m}$ an extension field of $\mathbb{F}_q$ of degree $m$. For two positive integers $k$ and $n$, denote by $\mathcal{M}_{k,n}(\mathbb{F}_{q^m})$ the set of all $k\times n$ matrices over $\mathbb{F}_{q^m}$, and by $GL_n(\mathbb{F}_{q^m})$ the set of all $n\times n$ invertible matrices over $\mathbb{F}_{q^m}$. For a matrix $M\in\mathcal{M}_{k,n}(\mathbb{F}_{q^m})$, the column rank of $M$ with respect to $\mathbb{F}_q$, denoted by $\textnormal{Clr}_q(M)$, is the largest number of columns of $M$ linearly independent over $\mathbb{F}_q$. Denote by $\langle M\rangle$ the vector space spanned by rows of $M$ over $\mathbb{F}_{q^m}$. 

An $[n,k]$ linear code $\mathcal{C}$ over $\mathbb{F}_{q^m}$ is a $k$-dimensional subspace of $\mathbb{F}_{q^m}^n$. The dual code of $\mathcal{C}$, denoted by $\mathcal{C}^\perp$, is the orthogonal space of $\mathcal{C}$ under the usual Euclidean inner product over $\mathbb{F}_{q^m}$. A $k\times n$ full-rank matrix $G\in\mathcal{M}_{k,n}(\mathbb{F}_{q^m})$ is called a generator matrix of $\mathcal{C}$  if the vector space $\langle G\rangle$ is exactly the code $\mathcal{C}$. A generator matrix of $\mathcal{C}^\perp$ is called a parity-check matrix of $\mathcal{C}$.

\subsection{Rank metric codes}
Now we recall some basic concepts for rank metric and rank metric codes.
\begin{definition}
For a vector $\bm{x}=(x_1,\cdots,x_n)\in\mathbb{F}_{q^m}^n$, the support of $\bm{x}$ denoted by Supp$(\bm{x})$, is defined to be the linear space spanned by coordinates of $\bm{x}$ over $\mathbb{F}_q$. Formally we have
\[\textnormal{Supp}(\bm{x})=\left\{\sum_{i=1}^n\lambda_ix_i:\lambda_i\in\mathbb{F}_q,1\leqslant i\leqslant n\right\}.\]
\end{definition}
 
\begin{definition}
For a vector $\bm{x}\in\mathbb{F}_{q^m}^n$, the rank weight of $\bm{x}$ denoted by $w_R(\bm{x})$, is defined to be the dimension of Supp$(\bm{x})$ over $\mathbb{F}_q$. 
\end{definition}

Given two vectors $\bm{x},\bm{y}\in\mathbb{F}_{q^m}^n$, the rank distance between $\bm{x}$ and $\bm{y}$, denoted by $d_R(\bm{x},\bm{y})$, is defined to be the rank weight of $\bm{x}-\bm{y}$. It is easy to verify that the function $d_R(\cdot,\cdot)$ defines a proper metric on $\mathbb{F}_{q^m}^n$. A code endowed with the rank metric is called a rank metric code, and in this paper by rank metric codes we always mean linear rank metric codes.

\begin{definition}
For a rank metric code $\mathcal{C}\subseteq \mathbb{F}_{q^m}^n$, the minimum rank distance of $\mathcal{C}$, denoted by $d(\mathcal{C})$, is defined as 
\[d(\mathcal{C})=\min\{d_R(\bm{x},\bm{y}):\bm{x},\bm{y}\in\mathcal{C}\ \textnormal{and }\bm{x}\neq\bm{y}\}.\]
\end{definition}

It is easy to verify that the minimum rank (Hamming) distance of a linear code is equal to its minimum rank (Hamming) weight. In the context of Hamming metric codes, the minimum distance $d$ of an $[n,k]$ linear code satisfies the Singleton bound $d\leqslant n-k+1$\cite{ling2004coding}. Similarly, the minimum rank distance of a rank metric code $\mathcal{C}$ satisfies the following Singleton-style bound.
\begin{theorem}[Singleton-style bound]\cite{gabidulin2003reducible}
Let $\mathcal{C}\subseteq\mathbb{F}_{q^m}^n$ be an $[n,k]$ rank metric code, then the minimum rank distance of $\mathcal{C}$ with respect to $\mathbb{F}_q$ satisfies the following inequality
\[d(\mathcal{C})\leqslant n-k+1.\]
\end{theorem}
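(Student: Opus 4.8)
The plan is to reduce the rank-metric Singleton bound to the ordinary (Hamming-metric) Singleton bound by exploiting the elementary inequality between rank weight and Hamming weight. First I would recall the basic fact that for any vector $\bm{x}\in\mathbb{F}_{q^m}^n$ one has $w_R(\bm{x})\leqslant w_H(\bm{x})$, where $w_H$ denotes the Hamming weight: indeed, $\textnormal{Supp}(\bm{x})$ is spanned over $\mathbb{F}_q$ by the nonzero coordinates of $\bm{x}$, so its dimension is at most the number of such coordinates. Consequently, for a linear code $\mathcal{C}\subseteq\mathbb{F}_{q^m}^n$, taking the minimum over all nonzero codewords gives $d(\mathcal{C})=\min_{\bm{0}\neq\bm{c}\in\mathcal{C}}w_R(\bm{c})\leqslant\min_{\bm{0}\neq\bm{c}\in\mathcal{C}}w_H(\bm{c})=d_H(\mathcal{C})$, the minimum Hamming distance of $\mathcal{C}$.

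Next I would invoke the classical Singleton bound for the $[n,k]$ linear code $\mathcal{C}$ viewed as a Hamming-metric code over $\mathbb{F}_{q^m}$, namely $d_H(\mathcal{C})\leqslant n-k+1$, which is quoted in the excerpt. Chaining the two inequalities yields $d(\mathcal{C})\leqslant d_H(\mathcal{C})\leqslant n-k+1$, which is exactly the claimed bound.

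Alternatively, one can give a direct self-contained argument without citing the Hamming Singleton bound: puncturing the code on any $k-1$ coordinates (equivalently, projecting onto the remaining $n-k+1$ coordinates) yields, by dimension count, a nonzero codeword $\bm{c}$ whose support as an $\mathbb{F}_q$-space is contained in the span of those $n-k+1$ surviving coordinates, hence $w_R(\bm{c})\leqslant n-k+1$. I do not anticipate any real obstacle here; the only point requiring a moment's care is the first inequality $w_R\leqslant w_H$, and more precisely that $k$ columns being $\mathbb{F}_{q^m}$-linearly independent forces the same columns to be $\mathbb{F}_q$-linearly independent in the relevant projected code, so that the usual dimension argument behind the Singleton bound goes through verbatim over $\mathbb{F}_{q^m}$.
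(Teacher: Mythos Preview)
Your argument is correct. Note, however, that the paper does not actually prove this theorem: it is stated with a citation to \cite{gabidulin2003reducible} and no proof is given, so there is nothing in the paper to compare against. Your first route---bounding rank weight by Hamming weight via $w_R(\bm{x})\leqslant w_H(\bm{x})$ and then invoking the classical Singleton bound $d_H(\mathcal{C})\leqslant n-k+1$ (which the paper does quote just before the theorem)---is the standard textbook proof and works exactly as you describe. The alternative puncturing argument is also fine, though your closing remark about $\mathbb{F}_{q^m}$- versus $\mathbb{F}_q$-linear independence of columns is a bit garbled and unnecessary: the puncturing/projection step only uses that the $\mathbb{F}_{q^m}$-linear map $\mathcal{C}\to\mathbb{F}_{q^m}^{k-1}$ has nontrivial kernel by dimension count, and any nonzero vector in that kernel automatically has rank weight at most $n-k+1$ since only $n-k+1$ of its coordinates can be nonzero.
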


\begin{remark}
A linear code attaining the Singleton-style bound is called a Maximum Rank Distance (MRD) code. Apparently an $[n,k]$ MRD code can correct up to $\lfloor\frac{n-k}{2}\rfloor$ rank errors.
\end{remark}

The following proposition implies that the maximum rank weight of a rank metric code is bounded from above by the column rank of its generator matrix.

\begin{proposition}\label{proposition4}
For a matrix $M\in\mathcal{M}_{k,n}(\mathbb{F}_{q^m})$ with $\textnormal{Clr}_q(M)=r$, the maximum rank weight of the code $\langle M\rangle$ is bounded by $r$ from above.
\end{proposition}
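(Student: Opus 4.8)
The plan is to show that any codeword $\bm{c}\in\langle M\rangle$ has its coordinates lying in a fixed $\mathbb{F}_q$-subspace of dimension at most $r$, so that $\mathrm{Supp}(\bm{c})$ has dimension at most $r$ and hence $w_R(\bm{c})\leqslant r$. First I would fix an $\mathbb{F}_q$-basis witnessing the column rank: since $\mathrm{Clr}_q(M)=r$, there exist $r$ columns of $M$, say those indexed by a set $J$ with $|J|=r$, that are linearly independent over $\mathbb{F}_q$ and such that every column of $M$ is an $\mathbb{F}_q$-linear combination of the columns indexed by $J$. Concretely, write $M = B P$, where $B\in\mathcal{M}_{k,r}(\mathbb{F}_{q^m})$ collects the columns of $M$ indexed by $J$ and $P\in\mathcal{M}_{r,n}(\mathbb{F}_q)$ has entries in $\mathbb{F}_q$ encoding how each column of $M$ is expressed in terms of those $r$ columns.

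Next I would use this factorization to control the support of an arbitrary codeword. A generic codeword is $\bm{c}=\bm{u}M$ for some $\bm{u}\in\mathbb{F}_{q^m}^k$, so $\bm{c}=(\bm{u}B)P$. Let $\bm{v}=\bm{u}B\in\mathbb{F}_{q^m}^r$, and let $V=\mathrm{Supp}(\bm{v})\subseteq\mathbb{F}_{q^m}$, which is an $\mathbb{F}_q$-subspace of dimension at most $r$. Since $P$ has entries in $\mathbb{F}_q$, each coordinate $c_i=\sum_{j=1}^r v_j P_{ji}$ is an $\mathbb{F}_q$-linear combination of the $v_j$, hence lies in $V$. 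Therefore $\mathrm{Supp}(\bm{c})\subseteq V$, which gives $w_R(\bm{c})=\dim_{\mathbb{F}_q}\mathrm{Supp}(\bm{c})\leqslant\dim_{\mathbb{F}_q}V\leqslant r$.

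Taking the maximum over all codewords $\bm{c}\in\langle M\rangle$ then yields the claimed bound on the maximum rank weight. I do not expect any serious obstacle here; the only point requiring a little care is the passage from "column rank over $\mathbb{F}_q$ equals $r$" to the explicit decomposition $M=BP$ with $P$ over $\mathbb{F}_q$, i.e. checking that the coefficients expressing the remaining columns in terms of a chosen independent set genuinely lie in $\mathbb{F}_q$ and not merely in $\mathbb{F}_{q^m}$ — this follows from the definition of $\mathrm{Clr}_q$ together with the fact that an $\mathbb{F}_q$-linear dependence among vectors of $\mathbb{F}_{q^m}^k$, once it exists, can be solved for within $\mathbb{F}_q$ by linear algebra over $\mathbb{F}_q$.
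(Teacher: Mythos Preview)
Your argument is correct and essentially the same as the paper's: both exploit $\mathrm{Clr}_q(M)=r$ to obtain an $\mathbb{F}_q$-linear factorisation of $M$ and then bound the rank weight of an arbitrary codeword. The only cosmetic difference is that the paper right-multiplies by an invertible $Q\in GL_n(\mathbb{F}_q)$ so that $MQ=[M'\,|\,O]$ and then invokes the fact that $Q$ is a rank isometry, whereas you write $M=BP$ with $P$ over $\mathbb{F}_q$ and argue directly that $\mathrm{Supp}(\bm{c})\subseteq\mathrm{Supp}(\bm{u}B)$; these are two phrasings of the same linear-algebra observation.
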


\begin{proof}
It suffices to prove that for any $\bm{v}\in\langle M\rangle$, we have $w_R(\bm{v})\leqslant r$. Since $\textnormal{Clr}_q(M)=r$, there exists $Q\in GL_n(\mathbb{F}_q)$ such that $MQ=[M'|O]$, where $M'\in\mathcal{M}_{k,r}(\mathbb{F}_{q^m})$ with $\textnormal{Clr}_q(M')=r$ and $O$ is a zero matrix. For any $\bm{v}\in\langle M\rangle$, there exists $\bm{x}\in\mathbb{F}_{q^m}^k$ such that $\bm{v}=\bm{x}M$ and 
\[\bm{v}Q=\bm{x}MQ=\bm{x}[M'|O]=(\bm{x}',\bm{0}),\] 
where $\bm{x}'\in\mathbb{F}_{q^m}^r$ and $\bm{0}$ is a zero vector. Hence we have $w_R(\bm{v})=w_R(\bm{v}Q)\leqslant r$. This concludes the proof.
\end{proof}

\subsection{Gabidulin codes}
Gabidulin codes can be viewed as an analogue of GRS codes in the rank metric setting, and these two types of codes resemble each other closely in the construction principle. GRS codes admit generator matrices with the Vandermonde structure, while Gabidulin codes can be described by Moore matrices defined as follows.

\begin{definition}
For an integer $s$, denote by $[s]$ the $s$-th Frobenius power $q^s$. A matrix $G\in\mathcal{M}_{k,n}(\mathbb{F}_{q^m})$ is called a Moore matrix generated by $\bm{a}=(a_1,\cdots,a_n)\in\mathbb{F}_{q^m}^n$ if the $s$-th row of $G$ equals the coordinate-wise Frobenius power $\bm{a}^{[s-1]}=(a_1^{[s-1]},\cdots,a_n^{[s-1]})$ for each $1\leqslant s\leqslant k$. Formally we have
\begin{align}\label{moore}
G=
\begin{pmatrix}
a_1&a_2&\cdots&a_n\\
a_1^{[1]}&a_2^{[1]}&\cdots&a_n^{[1]}\\
\vdots&\vdots&&\vdots\\
a_1^{[k-1]}&a_2^{[k-1]}&\cdots&a_n^{[k-1]}
\end{pmatrix}.
\end{align}
\end{definition}

For a matrix $G\in\mathcal{M}_{k,n}(\mathbb{F}_{q^m})$, we define $G^{[s]}=(G_{ij}^{[s]})$. For a set $S\subseteq \mathbb{F}_{q^m}^n$, we define $S^{[s]}=\{\bm{x}^{[s]}:\bm{x}\in S\}$. For a linear code $\mathcal{C}\subseteq\mathbb{F}_{q^m}^n$, it is easy to verify that $\mathcal{C}^{[s]}$ is also an $\mathbb{F}_{q^m}$-linear code.

\begin{definition}[Gabidulin codes]\label{definition1}
For a vector $\bm{a}\in\mathbb{F}_{q^m}^n$ with $w_R(\bm{a})=n\leqslant m$, let $G$ be the $k\times n$ Moore matrix generated by $\bm{a}$. The $[n,k]$ Gabidulin code $\mathcal{G}_{n,k}(\bm{a})$ over $\mathbb{F}_{q^m}$ generated by $\bm{a}$ is defined to be the linear space $\langle G\rangle$, namely we have $\mathcal{G}_{n,k}(\bm{a})=\langle G\rangle$. 
\end{definition}

A major reason for Gabidulin codes being widely used in the design of cryptosystems consists in their remarkable error-correcting capability and simple algebraic structure. Now we recall some properties of Gabidulin codes through the following two theorems without proving.

\begin{theorem}\cite{horlemann2015new}\label{horlemann2015new}
The Gabidulin code $\mathcal{G}_{n,k}(\bm{a})$ is an MRD code. In other words, $\mathcal{G}_{n,k}(\bm{a})$ attains the Singleton-style bound for rank metric codes.
\end{theorem}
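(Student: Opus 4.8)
The plan is to prove that the minimum rank distance of $\mathcal{G}_{n,k}(\bm{a})$ is at least $n-k+1$; since the Singleton-style bound already gives $d(\mathcal{G}_{n,k}(\bm{a}))\leqslant n-k+1$, this forces equality, which is exactly the MRD property. Because the code is $\mathbb{F}_{q^m}$-linear, it suffices to bound from below the rank weight of an arbitrary nonzero codeword.

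The first step is to identify codewords with linearized polynomials. A codeword $\bm{c}=\bm{x}G$ with $\bm{x}=(x_0,\dots,x_{k-1})\in\mathbb{F}_{q^m}^k$ is, by the Moore structure \eqref{moore} of $G$, exactly the evaluation vector $\bm{c}=(f(a_1),\dots,f(a_n))$ of the linearized polynomial (i.e.\ $q$-polynomial) $f(X)=\sum_{i=0}^{k-1}x_iX^{[i]}$, whose $q$-degree is at most $k-1$. Let $V\subseteq\mathbb{F}_{q^m}$ be the $\mathbb{F}_q$-span of the coordinates of $\bm{a}$; the hypothesis $w_R(\bm{a})=n$ says precisely that $\dim_{\mathbb{F}_q}V=n$. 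Since $f$ acts $\mathbb{F}_q$-linearly on $\mathbb{F}_{q^m}$ and the $a_j$ span $V$, the support of $\bm{c}$ equals $f(V)$, so $w_R(\bm{c})=\dim_{\mathbb{F}_q}f(V)$.

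The second step invokes the classical fact about roots of linearized polynomials (due to Ore): a nonzero $q$-polynomial of $q$-degree at most $k-1$ has, as its set of roots in $\overline{\mathbb{F}_q}$, an $\mathbb{F}_q$-subspace of dimension at most $k-1$. Hence $\dim_{\mathbb{F}_q}\ker f\leqslant k-1$, so a fortiori $\dim_{\mathbb{F}_q}(\ker f\cap V)\leqslant k-1$. Applying the rank--nullity theorem to the $\mathbb{F}_q$-linear map $f|_V$ gives
\[
w_R(\bm{c})=\dim_{\mathbb{F}_q}f(V)=n-\dim_{\mathbb{F}_q}(\ker f\cap V)\geqslant n-(k-1)=n-k+1 .
\]
Applied to nonzero $\bm{x}$ this also shows $\bm{c}\neq\bm{0}$, so $G$ has rank $k$ and the code indeed has dimension $k$. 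Combining $d(\mathcal{G}_{n,k}(\bm{a}))\geqslant n-k+1$ with the Singleton-style bound yields $d(\mathcal{G}_{n,k}(\bm{a}))=n-k+1$, i.e.\ $\mathcal{G}_{n,k}(\bm{a})$ is an MRD code.

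The only substantive ingredient is the dimension bound on the root space of a linearized polynomial; everything else is bookkeeping with rank--nullity. The step I expect to require the most care is the clean identification $w_R(\bm{c})=\dim_{\mathbb{F}_q}f(V)$ together with pinning down exactly where the hypothesis $w_R(\bm{a})=n$ is used: it is precisely what guarantees $\dim_{\mathbb{F}_q}V=n$, and without it the lower bound collapses.
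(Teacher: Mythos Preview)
Your argument is correct: identifying codewords with evaluations of $q$-polynomials of $q$-degree at most $k-1$, using that the roots of such a nonzero polynomial form an $\mathbb{F}_q$-subspace of dimension at most $k-1$, and applying rank--nullity to $f|_V$ yields $w_R(\bm{c})\geqslant n-k+1$ for every nonzero codeword. The identification $\mathrm{Supp}(\bm{c})=f(V)$ is exactly right because $f$ is $\mathbb{F}_q$-linear and the $a_j$ span $V$.

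There is, however, nothing to compare against in the paper itself: the authors explicitly state the result ``without proving'' and simply cite \cite{horlemann2015new}. Your proof is in fact the classical one going back to Gabidulin's original paper \cite{gabidulin1985theory}; it is self-contained and does not rely on anything beyond Ore's bound on the root space of a linearized polynomial, which is immediate from the ordinary degree being $q^{k-1}$.
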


According to Theorem \ref{horlemann2015new}, the minimum rank distance of $\mathcal{G}_{n,k}(\bm{a})$ is $n-k+1$. This implies that any $\lfloor\frac{n-k}{2}\rfloor$ rank errors can be corrected. In fact, several efficient docoding algorithms for Gabidulin codes already exist (for instance \cite{gabidulin1985theory,loidreau2005welch,richter2004error}).

\begin{theorem}\cite{gaborit2018polynomial}\label{dualcode}
The dual code of $\mathcal{G}_{n,k}(\bm{a})$ is the Gabidulin code $\mathcal{G}_{n,n-k}(\bm{b}^{[k-n+1)]})$ for some $\bm{b}\in\mathcal{G}_{n,n-1}(\bm{a})^\perp$ with $w_R(\bm{b})=n$.
\end{theorem}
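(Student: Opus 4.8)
The plan is to produce an explicit Moore-matrix parity-check matrix for $\mathcal{G}_{n,k}(\bm{a})$ and then finish by a dimension count. Write $G_j(\bm{c})\in\mathcal{M}_{j,n}(\mathbb{F}_{q^m})$ for the $j\times n$ Moore matrix generated by a vector $\bm{c}$, so that $\mathcal{G}_{n,j}(\bm{c})=\langle G_j(\bm{c})\rangle$ when $w_R(\bm{c})=n$. Two observations will be used repeatedly. First, a vector $\bm{v}$ lies in $\mathcal{G}_{n,j}(\bm{a})^{\perp}$ exactly when $\sum_{i=1}^{n}a_i^{[s]}v_i=0$ for $0\leqslant s\leqslant j-1$, these sums being the coordinates of $G_j(\bm{a})\bm{v}^{\mathsf T}$. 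Second, if $w_R(\bm{c})=w$ and $j\leqslant w$, then the rows $\bm{c}^{[0]},\dots,\bm{c}^{[j-1]}$ of $G_j(\bm{c})$ are $\mathbb{F}_{q^m}$-linearly independent, since a nonzero linearized polynomial of $q$-degree at most $j-1<w$ cannot vanish on the $w$-dimensional space $\textnormal{Supp}(\bm{c})$; in particular $G_j(\bm{c})$ has rank $j$, and a square Moore matrix generated by a full-rank vector is invertible. Applying the second observation with $\bm{c}=\bm{a}$ yields $\dim\mathcal{G}_{n,k}(\bm{a})=k$, hence $\dim\mathcal{G}_{n,k}(\bm{a})^{\perp}=n-k$, and also that $\mathcal{G}_{n,n-1}(\bm{a})^{\perp}$ is one-dimensional; let $\bm{b}$ span it, so that $\sum_{i=1}^{n}a_i^{[s]}b_i=0$ for $0\leqslant s\leqslant n-2$. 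The only further identity needed is that Frobenius is a field automorphism of $\mathbb{F}_{q^m}$, giving $\sum_i x_i^{[s]}y_i^{[t]}=\bigl(\sum_i x_i^{[s-t]}y_i\bigr)^{[t]}$ for all integers $s,t$.

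Next I would verify the inclusion $\mathcal{G}_{n,n-k}(\bm{b}^{[k-n+1]})\subseteq\mathcal{G}_{n,k}(\bm{a})^{\perp}$. Row $s$ of $G_k(\bm{a})$ is $\bm{a}^{[s]}$ for $0\leqslant s\leqslant k-1$, and row $t$ of $G_{n-k}(\bm{b}^{[k-n+1]})$ is $\bm{b}^{[k-n+1+t]}$ for $0\leqslant t\leqslant n-k-1$; by the identity above, their inner product equals $\bigl(\sum_{i=1}^{n}a_i^{[\,s-k+n-1-t\,]}b_i\bigr)^{[k-n+1+t]}$. As $(s,t)$ runs over $\{0,\dots,k-1\}\times\{0,\dots,n-k-1\}$, the exponent $s-k+n-1-t$ takes values only in $\{0,\dots,n-2\}$, so each such inner product vanishes by the choice of $\bm{b}$. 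Hence every row of $G_{n-k}(\bm{b}^{[k-n+1]})$ is orthogonal to every row of $G_k(\bm{a})$, which gives the inclusion.

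The substantive step is $w_R(\bm{b})=n$, which is what makes $\mathcal{G}_{n,n-k}(\bm{b}^{[k-n+1]})$ a genuine $[n,n-k]$ Gabidulin code in the sense of Definition~\ref{definition1}. Suppose $w_R(\bm{b})=r<n$; let $u_1,\dots,u_r$ be an $\mathbb{F}_q$-basis of $\textnormal{Supp}(\bm{b})$, and write $b_i=\sum_{j=1}^{r}c_{ji}u_j$ with $C=(c_{ji})\in\mathcal{M}_{r,n}(\mathbb{F}_q)$ of rank $r$. Putting $d_j=\sum_{i=1}^{n}c_{ji}a_i$, so that $d_j^{[s]}=\sum_i c_{ji}a_i^{[s]}$, the relations defining $\bm{b}$ become $\sum_{j=1}^{r}u_j d_j^{[s]}=0$ for $0\leqslant s\leqslant n-2$. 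Here $(d_1,\dots,d_r)$ has full rank $r$, because $C$ has rank $r$ and $a_1,\dots,a_n$ are $\mathbb{F}_q$-linearly independent; hence the first $r$ of these relations (available since $r\leqslant n-1$) say that $(u_1,\dots,u_r)^{\mathsf T}$ lies in the kernel of the invertible $r\times r$ Moore matrix $G_r(d_1,\dots,d_r)$, forcing $u_1=\dots=u_r=0$, a contradiction. Therefore $w_R(\bm{b})=n$, so also $w_R(\bm{b}^{[k-n+1]})=n$, and $\mathcal{G}_{n,n-k}(\bm{b}^{[k-n+1]})$ has dimension $n-k$. Being a subspace of $\mathcal{G}_{n,k}(\bm{a})^{\perp}$ of the same dimension $n-k$, it equals $\mathcal{G}_{n,k}(\bm{a})^{\perp}$, and the theorem follows.

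I expect the Frobenius index bookkeeping in the inclusion step to be the most error-prone part, while the only genuinely substantive point is the full-rank claim $w_R(\bm{b})=n$; the reduction to the invertibility of a square Moore matrix settles it without appealing to anything beyond the elementary facts recorded above.
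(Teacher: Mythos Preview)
Your proof is correct. The inclusion step is handled cleanly: the index $s-k+n-1-t$ indeed ranges over $\{0,\dots,n-2\}$ as $(s,t)$ varies over $\{0,\dots,k-1\}\times\{0,\dots,n-k-1\}$, so every inner product vanishes by the defining relations for $\bm{b}$. The argument for $w_R(\bm{b})=n$ is also sound: the key point that $(d_1,\dots,d_r)$ has $\mathbb{F}_q$-rank $r$ follows, as you say, from $C$ having rank $r$ and $w_R(\bm{a})=n$, and then the invertibility of the square Moore matrix $G_r(d_1,\dots,d_r)$ forces $u_1=\cdots=u_r=0$.

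As for comparison: the paper does not prove this theorem at all. It is quoted from \cite{gaborit2018polynomial} and stated without proof, so there is no argument in the paper to compare yours against. Your self-contained treatment, relying only on the linearized-polynomial interpretation of Moore matrices and a dimension count, is exactly the kind of elementary proof one would expect for this standard fact about Gabidulin codes.
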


\section{Loidreau's scheme}\label{section3}
For a vector $\bm{a}\in \mathbb{F}_{q^m}^n$ with $w_R(\bm{a})=n$, denote by $G$ a generator matrix of $\mathcal{G}_{n,k}(\bm{a})$. For a positive integer $\lambda\ll m$, let $\mathcal{V}\subseteq \mathbb{F}_{q^m}$ be an $\mathbb{F}_q$-linear space of dimension $\lambda$. Now we give a simple description of Loidreau's scheme through the following three algorithms.
\begin{itemize}
\item Key Generation
\item[]Randomly choose $P\in GL_n(\mathbb{F}_{q^m})$ whose entries are taken from $\mathcal{V}$ and compute $G_{pub}=GP^{-1}$. We publish $(G_{pub},t)$ as the public key where $t=\lfloor \frac{n-k}{2\lambda}\rfloor$, and keep $(\bm{a},P)$ as the secret key.
\item Encryption
\item[]For a plaintext $\bm{m}\in\mathbb{F}_{q^m}^k$, randomly choose a vector $\bm{e}\in\mathbb{F}_{q^m}^n$ with $w_R(\bm{e})=t$. The ciphertext corresponding to $\bm{m}$ is computed as $\bm{c}=\bm{m}G_{pub}+\bm{e}$.
\item Decryption
\item[]Compute $\bm{c}'=\bm{c}P=\bm{m}G+\bm{e}P$. Since $w_R(\bm{e}P)\leqslant w_R(\bm{e})\cdot \dim_q(\mathcal{V})\leqslant \lfloor \frac{n-k}{2}\rfloor$, decoding $\bm{c}'$ will lead to the plaintext $\bm{m}$.
\end{itemize}

\section{The Coggia-Couvreur attack}\label{section4}
Before describing the Coggia-Couvreur attack, we first introduce a distinguisher for Gabidulin codes. This distinguisher provides us with a method of distinguishing Gabidulin codes from general ones. 

\subsection{The distinguisher for Gabidulin codes}
Most of cryptosystems based on Gabidulin codes have been proved to be insecure against structural attacks. Although these attacks were proposed to cryptanalyze different variants of the GPT cryptosystem, the principle for their work is based on the same observation that one can distinguish Gabidulin codes from general ones by performing a simple operation on these codes. 

Given a random linear code $\mathcal{C}\subseteq \mathbb{F}_{q^m}^n$ of dimension $k\leqslant n/2$, the expected dimension of the code $\mathcal{C}+\mathcal{C}^{[1]}$ equals $2k$, or equivalently $\mathcal{C}\cap \mathcal{C}^{[1]}=\{\bm{0}\}$ holds with high probability.
But for a Gabidulin code $\mathcal{G}_{n,k}(\bm{a})$, we have $\mathcal{G}_{n,k}(\bm{a})+\mathcal{G}_{n,k}(\bm{a})^{[1]}=\mathcal{G}_{n,k+1}(\bm{a})$, namely the dimension of $\mathcal{G}_{n,k}(\bm{a})+\mathcal{G}_{n,k}(\bm{a})^{[1]}$ is $k+1$. More generally, we have the following two propositions.

\begin{proposition}\cite{coggia2020security}\label{proposition1}
Let $\mathcal{C}\subseteq\mathbb{F}_{q^m}^n$ be a random linear code of length $n$ and dimension $k$. For a non-negative integer $l$ and a positive integer $s<k$, we have
\[\textnormal{Pr}\left(\dim(\mathcal{C}+\mathcal{C}^{[1]}+\cdots+\mathcal{C}^{[s]})\leqslant\min\{n,(s+1)k\}-l\right)=O(q^{-ml}).\]
\end{proposition}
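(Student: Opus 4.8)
The plan is to reduce the statement to a genericity (Zariski-density) argument over the algebraic closure, exploiting the fact that the event in question is cut out by the vanishing of certain minors. First I would fix a parity-check/generator description: a random $[n,k]$ code $\mathcal{C}$ is the row space of a random $G\in\mathcal{M}_{k,n}(\mathbb{F}_{q^m})$, and the sum $\mathcal{C}+\mathcal{C}^{[1]}+\cdots+\mathcal{C}^{[s]}$ is the row space of the stacked matrix
\[
\widehat{G}=\begin{pmatrix} G\\ G^{[1]}\\ \vdots\\ G^{[s]}\end{pmatrix}\in\mathcal{M}_{(s+1)k,\,n}(\mathbb{F}_{q^m}).
\]
Its dimension is $\rank(\widehat{G})$, which is $\geqslant \min\{n,(s+1)k\}-l$ unless every $(\min\{n,(s+1)k\}-l+1)\times(\min\{n,(s+1)k\}-l+1)$ minor of $\widehat{G}$ vanishes. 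So the bad event is contained in the common zero locus of a family of polynomials in the entries of $G$; I would then bound the probability by a Schwartz–Zippel type estimate, which is where the $q^{-ml}$ scaling comes from — each additional unit of corank forces roughly $m$ extra independent scalar constraints over $\mathbb{F}_q$ (equivalently one constraint over $\mathbb{F}_{q^m}$), and $l$ units of corank give $O(q^{-ml})$.

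The crucial point, and the main obstacle, is to show that the relevant minors of $\widehat{G}$ do \emph{not} vanish identically — i.e. that a generic $G$ really does achieve $\rank(\widehat{G})=\min\{n,(s+1)k\}$ — because only then is the zero locus a proper subvariety and the Schwartz–Zippel bound nontrivial. Here one must use that $s<k$ (so $(s+1)$ Frobenius twists of $k$ independent rows can plausibly be independent) together with the behaviour of Gabidulin-type/Moore-type configurations: I would exhibit one explicit choice of $G$ for which $\widehat{G}$ has full rank $\min\{n,(s+1)k\}$. A natural candidate is to take the rows of $G$ to be the first $k$ rows of the Moore matrix generated by a vector $\bm{a}$ of full rank weight $n$; then $\widehat{G}$ is (a row-permutation of) the $\min\{(s+1)k,\cdot\}$ rows of the Moore matrix generated by $\bm{a}$, and by Theorem~\ref{horlemann2015new} the corresponding Gabidulin code has dimension $\min\{n,(s+1)k\}$, giving the required full rank. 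This single witness certifies that the pertinent minor is a nonzero polynomial.

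With the non-vanishing established, I would finish by the counting argument: write $N=\min\{n,(s+1)k\}$, pick one $N-l+1$ subset of columns and the complementary rectangular submatrix structure, and observe that "$\rank(\widehat{G})\leqslant N-l$" implies the simultaneous vanishing of an $(N-l+1)\times(N-l+1)$ minor for every choice of $N-l+1$ rows and columns; by the witness above at least one such minor is a nonzero polynomial of bounded degree in the entries of $G$, and more carefully one sees that dropping the rank by $l$ below the generic value imposes $l$ independent conditions over $\mathbb{F}_{q^m}$, each holding with probability $O(q^{-m})$ when the entries of $G$ are uniform in $\mathbb{F}_{q^m}$. Multiplying, and absorbing the (constant, i.e. depending only on $n,k,s,l$) number of minors and the degree factors into the $O(\cdot)$, yields $\textnormal{Pr}(\dim(\mathcal{C}+\cdots+\mathcal{C}^{[s]})\leqslant N-l)=O(q^{-ml})$, as claimed. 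The one delicate bookkeeping step is to make the "corank $l$ costs $l$ generic conditions" precise; the clean way is induction on $l$, conditioning on a maximal nonsingular submatrix of size $N-l$ and showing the next border minor is a nonconstant polynomial in the remaining free entries, hence nonzero with probability $1-O(q^{-m})$.
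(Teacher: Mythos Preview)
The paper does not prove this proposition at all: it is quoted verbatim from \cite{coggia2020security} and used as a black box, so there is no ``paper's own proof'' to compare against. That said, your outline contains a genuine error that would make the argument collapse.

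Your chosen witness is exactly the wrong one. If $G$ is the first $k$ rows of the Moore matrix generated by $\bm{a}$, i.e.\ a generator matrix of $\mathcal{G}_{n,k}(\bm{a})$, then the rows of $G^{[j]}$ are $\bm{a}^{[j]},\bm{a}^{[j+1]},\dots,\bm{a}^{[j+k-1]}$, and the row space of the stacked matrix $\widehat{G}$ is spanned by $\bm{a}^{[0]},\dots,\bm{a}^{[k+s-1]}$. Hence $\rank(\widehat{G})=\min\{n,k+s\}$, \emph{not} $\min\{n,(s+1)k\}$; this is precisely Proposition~\ref{proposition2}, and it is the very phenomenon (the Gabidulin distinguisher) that the whole paper is built on. Theorem~\ref{horlemann2015new} only asserts the MRD property and says nothing about the rank of $\widehat{G}$. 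So your ``single witness certifies that the pertinent minor is a nonzero polynomial'' step fails, and with it the Schwartz--Zippel conclusion.

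A second, more technical point: even with a correct witness, treating the minors of $\widehat{G}$ as polynomials in the entries of $G$ over $\mathbb{F}_{q^m}$ is delicate, since $G^{[j]}$ involves the Frobenius powers $x\mapsto x^{q^j}$, which are not $\mathbb{F}_{q^m}$-polynomial maps of bounded degree. The clean formulation works over $\mathbb{F}_q$ (where Frobenius is $\mathbb{F}_q$-linear) and then counts, but your write-up slides between the two viewpoints without making this precise. The inductive ``corank $l$ costs $l$ conditions'' heuristic is the right shape, but as written it is not yet a proof.
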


\begin{proposition}\cite{coggia2020security}\label{proposition2}
Let $k\leqslant n$ and $s$ be a positive integer, then for any $\bm{a}\in\mathbb{F}_{q^m}^n$ with $w_R(\bm{a})=n$, we have
\begin{align*}
\mathcal{G}_{n,k}(\bm{a})\cap\mathcal{G}_{n,k}(\bm{a})^{[1]}&=\mathcal{G}_{n,k-1}(\bm{a}^{[1]});\\
\mathcal{G}_{n,k}(\bm{a})+\mathcal{G}_{n,k}(\bm{a})^{[1]}+\cdots+\mathcal{G}_{n,k}(\bm{a})^{[s]}&=\mathcal{G}_{n,k+s}(\bm{a}).
\end{align*}
\end{proposition}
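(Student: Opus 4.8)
The plan is to prove both identities by exploiting the fundamental fact that a Gabidulin code $\mathcal{G}_{n,k}(\bm{a})$ is precisely the set of evaluation vectors $(\,f(a_1),\dots,f(a_n)\,)$ where $f$ ranges over all $q$-linearized polynomials $f(x)=\sum_{i=0}^{k-1}f_i x^{[i]}$ of $q$-degree less than $k$. First I would translate the Frobenius action on codewords into an action on linearized polynomials: if $\bm{c}=(f(a_1),\dots,f(a_n))$ then $\bm{c}^{[1]}=(f(a_1)^{[1]},\dots,f(a_n)^{[1]})$, and since raising to the power $q$ is a field automorphism that commutes with the Frobenius-type exponents, $f(a_j)^{[1]} = \tilde f(a_j)$ where $\tilde f(x)=\sum_{i=0}^{k-1} f_i^{[1]} x^{[i+1]}$. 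Thus $\mathcal{G}_{n,k}(\bm{a})^{[1]}$ is exactly the evaluation code of all linearized polynomials supported on the exponents $\{[1],\dots,[k]\}$, i.e.\ of the shifted space $x\cdot\mathcal{L}_{k}\circ(\cdot)^{[1]}$. More generally, $\mathcal{G}_{n,k}(\bm{a})^{[j]}$ corresponds to linearized polynomials whose monomials lie in $\{x^{[j]},\dots,x^{[k-1+j]}\}$.

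With this dictionary the sum identity becomes almost immediate: $\mathcal{G}_{n,k}(\bm{a})+\cdots+\mathcal{G}_{n,k}(\bm{a})^{[s]}$ is the evaluation code of the span of all linearized monomials with exponents in $\{0,1,\dots,k-1\}\cup\{1,\dots,k\}\cup\cdots\cup\{s,\dots,k-1+s\} = \{0,1,\dots,k+s-1\}$, which is by definition $\mathcal{G}_{n,k+s}(\bm{a})$. One subtlety to check is that the evaluation map from the space of linearized polynomials of $q$-degree $<k+s$ to $\mathbb{F}_{q^m}^n$ remains injective (so that dimensions match and the spanned code is genuinely $\mathcal{G}_{n,k+s}(\bm{a})$ and not something smaller): this holds because $w_R(\bm{a})=n$ means $a_1,\dots,a_n$ are $\mathbb{F}_q$-linearly independent, and a nonzero linearized polynomial of $q$-degree $<k+s\leqslant n\leqslant m$ has an $\mathbb{F}_q$-kernel of dimension at most its $q$-degree, hence cannot vanish on all of them. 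For the intersection identity $\mathcal{G}_{n,k}(\bm{a})\cap\mathcal{G}_{n,k}(\bm{a})^{[1]}=\mathcal{G}_{n,k-1}(\bm{a}^{[1]})$, I would argue again via the polynomial description. A codeword in the intersection is simultaneously the evaluation of some $f$ with exponents in $\{0,\dots,k-1\}$ and of some $g$ with exponents in $\{1,\dots,k\}$; by the injectivity just discussed, $f=g$ as linearized polynomials, so the common exponent support is $\{1,\dots,k-1\}$, i.e.\ $f(x)=h(x^{[1]})$ for a linearized $h$ of $q$-degree $<k-1$. Evaluating, $f(a_j)=h(a_j^{[1]})$, so the codeword lies in the evaluation code of $\mathcal{G}_{n,k-1}$ with support vector $\bm{a}^{[1]}$; the reverse inclusion is the same computation read backwards, and $w_R(\bm{a}^{[1]})=n$ since $\bm{x}\mapsto\bm{x}^{[1]}$ is an $\mathbb{F}_q$-semilinear bijection preserving $\mathbb{F}_q$-independence.

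The main obstacle, and the step I would be most careful with, is the bookkeeping around injectivity of the evaluation maps and the resulting equality of dimensions: all the set-theoretic containments ``$\subseteq$'' are easy, but to conclude equality one must rule out unexpected collapses, and this is exactly where the hypothesis $w_R(\bm{a})=n$ (and implicitly $k+s\leqslant n\leqslant m$, which is inherited from Definition \ref{definition1} applied to the larger code) does the work. Equivalently, one can phrase everything at the level of Moore matrices: $\mathcal{G}_{n,k}(\bm{a})^{[j]}$ is generated by the Moore matrix whose rows are $\bm{a}^{[j]},\bm{a}^{[j+1]},\dots,\bm{a}^{[k-1+j]}$, and stacking the generator matrices for $j=0,\dots,s$ yields rows $\bm{a}^{[0]},\dots,\bm{a}^{[k+s-1]}$ with multiplicities, whose row space is that of the full $(k+s)\times n$ Moore matrix generated by $\bm{a}$ — again non-degenerate precisely because the $a_i$ are $\mathbb{F}_q$-independent, which forces the relevant Moore matrices to have full rank.
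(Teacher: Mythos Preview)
Your argument is correct. Note, however, that the paper does not supply its own proof of this proposition: it is quoted from \cite{coggia2020security} and stated without demonstration, so there is nothing in the present paper to compare your approach against.

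For what it is worth, your linearized-polynomial/Moore-matrix argument is the standard one and matches how the result is typically established. The essential content is exactly what you isolate: $\mathcal{G}_{n,k}(\bm{a})^{[j]}$ has generator rows $\bm{a}^{[j]},\dots,\bm{a}^{[k-1+j]}$, and the rows $\bm{a}^{[0]},\dots,\bm{a}^{[n-1]}$ are $\mathbb{F}_{q^m}$-linearly independent because $w_R(\bm{a})=n$ (equivalently, the corresponding Moore determinant is nonzero). From this the sum identity is immediate, and the intersection identity follows by the dimension count $\dim(\mathcal{G}_{n,k}(\bm{a})\cap\mathcal{G}_{n,k}(\bm{a})^{[1]})=2k-(k+1)=k-1$ together with the obvious containment $\langle\bm{a}^{[1]},\dots,\bm{a}^{[k-1]}\rangle\subseteq$ intersection. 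Your care about injectivity of the evaluation map, and the implicit hypothesis $k+s\leqslant n$ needed for $\mathcal{G}_{n,k+s}(\bm{a})$ to make sense via Definition~\ref{definition1}, is well placed; if one drops that restriction both sides simply become the ambient space $\mathbb{F}_{q^m}^n$ once $k+s\geqslant n$, so the identity survives under the natural extended reading.
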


\subsection{Description of the Coggia-Couvreur attack}\label{section4.2}
In this part we investigate the structural vulnerability of Loidreau's cryptosystem in the case of $\lambda=2$ and the dimension of the public code $\mathcal{C}_{pub}=\langle G_{pub}\rangle$ being greater than $n/2$. The principle for the Coggia-Couvreur attack lies in Propositions \ref{proposition1} and \ref{proposition2}. Instead of directly 
operating the public code, the authors in \cite{coggia2020security} consider the dual of the public code because of the following lemma.

\begin{lemma}\cite{coggia2020security}
Any parity-check matrix $H_{pub}$ of $\mathcal{C}_{pub}$ can be expressed as 
\[H_{pub}=H_{sec}P^T,\]
where $H_{sec}$ is a parity-check matrix of the secret Gabidulin code $\mathcal{G}_{n,k}(\bm{a})$.
\end{lemma}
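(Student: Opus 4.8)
The plan is to produce the matrix $H_{sec}$ explicitly from $H_{pub}$ by undoing the column scrambler, and then to check that it has the two defining properties of a parity-check matrix of $\mathcal{G}_{n,k}(\bm{a})$, namely full rank $n-k$ and orthogonality to the code. Recall from the description of Loidreau's scheme that $G_{pub}=GP^{-1}$, where $G$ is a generator matrix of $\mathcal{G}_{n,k}(\bm{a})$ and $P\in GL_n(\mathbb{F}_{q^m})$. Let $H_{pub}\in\mathcal{M}_{n-k,n}(\mathbb{F}_{q^m})$ be an arbitrary parity-check matrix of $\mathcal{C}_{pub}$, so that $\rank(H_{pub})=n-k$ and $G_{pub}H_{pub}^T=O$.

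First I would set
\[
H_{sec}:=H_{pub}\left(P^T\right)^{-1}=H_{pub}\left(P^{-1}\right)^T,
\]
which immediately gives $H_{sec}P^T=H_{pub}$, so the only thing left is to verify that $H_{sec}$ is genuinely a parity-check matrix of the secret Gabidulin code. Since $(P^T)^{-1}\in GL_n(\mathbb{F}_{q^m})$, multiplication by it preserves rank, hence $\rank(H_{sec})=\rank(H_{pub})=n-k$. Next I would compute
\[
G\,H_{sec}^T=G\left(H_{pub}\left(P^{-1}\right)^T\right)^T=G P^{-1}H_{pub}^T=G_{pub}H_{pub}^T=O,
\]
so every row of $H_{sec}$ lies in $\mathcal{G}_{n,k}(\bm{a})^\perp$. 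As $\mathcal{G}_{n,k}(\bm{a})^\perp$ has dimension $n-k$ and $H_{sec}$ has $n-k$ linearly independent rows, the row space of $H_{sec}$ is exactly $\mathcal{G}_{n,k}(\bm{a})^\perp$; that is, $H_{sec}$ is a parity-check matrix of $\mathcal{G}_{n,k}(\bm{a})$. Combining this with $H_{sec}P^T=H_{pub}$ finishes the argument.

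The proof is essentially a routine linear-algebra manipulation, so there is no serious obstacle; the only point requiring care is the bookkeeping of transposes, since the column scrambler acts on $G$ on the right as $P^{-1}$ but on a parity-check matrix on the right as $(P^{-1})^T$. One should also note that the claim is stated for an \emph{arbitrary} parity-check matrix $H_{pub}$, not merely one particular choice, which is why the construction above starts from a given $H_{pub}$ and defines $H_{sec}$ accordingly, rather than the other way around.
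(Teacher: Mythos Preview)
Your argument is correct. The paper itself does not supply a proof of this lemma; it merely cites it from \cite{coggia2020security}, so there is no ``paper's own proof'' to compare against. That said, your construction --- setting $H_{sec}:=H_{pub}(P^T)^{-1}$, checking full rank via invertibility of $P^T$, and verifying $GH_{sec}^T=GP^{-1}H_{pub}^T=G_{pub}H_{pub}^T=O$ --- is exactly the standard (and essentially only) way to prove such a statement, and your attention to the transpose bookkeeping and to the fact that the claim covers an arbitrary $H_{pub}$ is appropriate.
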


The authors considered the case of $\lambda=2$, namely the linear space $\mathcal{V}\subseteq \mathbb{F}_{q^m}$ has dimension 2 over $\mathbb{F}_q$. Suppose $\mathcal{V}$ is spanned by $\alpha,\beta\in\mathbb{F}_{q^m}^*$ over $\mathbb{F}_q$, namely $\mathcal{V}=\langle \alpha,\beta\rangle_{\mathbb{F}_q}$. Let $H'_{sec}=\alpha H_{sec}$ and $P'=\alpha^{-1}P$, apparently we have $H_{pub}=H'_{sec}P'^T$. It is easy to see that $H_{sec}'$ spans the same code as $H_{sec}$ and entries of $P'$ are contained in $\mathcal{V}'=\langle 1,\alpha^{-1}\beta\rangle_{\mathbb{F}_q}$. Hence it is reasonable to suppose that $\mathcal{V}=\langle 1,\gamma\rangle_{\mathbb{F}_q}$ for some $\gamma\in\mathbb{F}_{q^m}^*$. In this situation, we can express $P^T$ in the form of
\[P^T=P_0+\gamma P_1,\]
where $P_0,P_1\in \mathcal{M}_{n,n}(\mathbb{F}_q)$.

According to Theorem \ref{dualcode}, there exists some $\bm{b}\in\mathcal{G}_{n,n-1}(\bm{a})^\perp$ with $w_R(\bm{b})=n$ such that $\mathcal{G}_{n,k}(\bm{a})^\perp=\mathcal{G}_{n,n-k}(\bm{b})$. We define
\[\bm{g}=\bm{b}P_0, \bm{h}=\bm{b}P_1.\]
As for the triple $(\gamma,\bm{g},\bm{h})$, the authors made the following two assumptions:
\begin{itemize}
\item[(1)]$\mathcal{G}_{n,n-k+2}(\bm{g})\cap\mathcal{G}_{n,n-k+2}(\bm{h})=\{\bm{0}\}$ and $w_R(\bm{g}),w_R(\bm{h})\geq n-k+2$;
\item[(2)]$m>2$ and $\gamma$ is not contained in any proper subfield of $\mathbb{F}_{q^m}$.
\end{itemize}

The rationality for these two assumptions can be explained as follows. According to the authors' experiments on Magma, Assumption (1) holds with an extremely high probability. Apparently $m>2$ is reasonable because of $m\geqslant n$. On the other hand, if $\gamma$ is contained in some proper subfield of $\mathbb{F}_{q^m}$, then the adversary can find $\gamma$ through the exhausting method for the reason that even the union of all proper subfields of $\mathbb{F}_{q^m}$ contains much less elements than $\mathbb{F}_{q^m}$. Hence $\gamma$ cannot be contained in any proper subfield of $\mathbb{F}_{q^m}$.

The core of the Coggia-Couvreur attack is to find the triple $(\gamma,\bm{g},\bm{h})$ or one of its equivalent forms (see \cite{coggia2020security} for more details). With the knowledge of the triple $(\gamma,\bm{g},\bm{h})$ or one of its equivalent forms, one can decrypt any valid ciphertext in polynomial time and hence completely break Loidreau's cryptosystem.

What follows are two lemmas that will be useful for analysing the security of our modifications. For the remaining part of the Coggia-Couvreur attack,  interested readers can refer to \cite{coggia2020security} for more details. Now we introduce these two lemmas without proving. 
\begin{lemma}\cite{coggia2020security}
The code $\mathcal{C}_{pub}^\perp$ is spanned by 
\begin{align}\label{expression2}
\bm{g}+\gamma\bm{h},\bm{g}^{[1]}+\gamma\bm{h}^{[1]},\cdots,\bm{g}^{[n-k-1]}+\gamma\bm{h}^{[n-k-1]}.
\end{align}
\end{lemma}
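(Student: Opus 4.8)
The plan is to prove the identity by exhibiting an explicit generator matrix of $\mathcal{C}_{pub}^\perp$ and computing it row by row. First I would fix the choice of $H_{sec}$: by Theorem \ref{dualcode} we have $\mathcal{G}_{n,k}(\bm{a})^\perp=\mathcal{G}_{n,n-k}(\bm{b})$, so a legitimate parity-check matrix of the secret Gabidulin code is the $(n-k)\times n$ Moore matrix generated by $\bm{b}$, whose $i$-th row is $\bm{b}^{[i-1]}$ for $1\leqslant i\leqslant n-k$. Then $H_{pub}=H_{sec}P^T$ is a parity-check matrix of $\mathcal{C}_{pub}$, hence by definition a generator matrix of $\mathcal{C}_{pub}^\perp$, and it has full rank $n-k$ since $H_{sec}$ has rank $n-k$ and $P^T$ is invertible. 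So it suffices to show that the $i$-th row of $H_{pub}$ equals $\bm{g}^{[i-1]}+\gamma\bm{h}^{[i-1]}$; the resulting $n-k$ rows then automatically form a basis of $\mathcal{C}_{pub}^\perp$, which is exactly the spanning set (\ref{expression2}).

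For the computation I would substitute the decomposition $P^T=P_0+\gamma P_1$ with $P_0,P_1\in\mathcal{M}_{n,n}(\mathbb{F}_q)$ established in Section \ref{section4.2}, so that the $i$-th row of $H_{pub}$ becomes
\[\bm{b}^{[i-1]}P^T=\bm{b}^{[i-1]}P_0+\gamma\,\bm{b}^{[i-1]}P_1.\]
The key point is that the $q^{i-1}$-th Frobenius power commutes past any matrix with entries in $\mathbb{F}_q$: for $\bm{x}\in\mathbb{F}_{q^m}^n$ and $A\in\mathcal{M}_{n,n}(\mathbb{F}_q)$ one has $(\bm{x}A)^{[i-1]}=\bm{x}^{[i-1]}A^{[i-1]}=\bm{x}^{[i-1]}A$, because $c^{q^{i-1}}=c$ for every $c\in\mathbb{F}_q$. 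Applying this with $A=P_0$ and $A=P_1$, and recalling the definitions $\bm{g}=\bm{b}P_0$ and $\bm{h}=\bm{b}P_1$, gives $\bm{b}^{[i-1]}P_0=(\bm{b}P_0)^{[i-1]}=\bm{g}^{[i-1]}$ and likewise $\bm{b}^{[i-1]}P_1=\bm{h}^{[i-1]}$. Hence the $i$-th row of $H_{pub}$ is $\bm{g}^{[i-1]}+\gamma\bm{h}^{[i-1]}$, and letting $i$ range from $1$ to $n-k$ completes the argument.

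I expect no real obstacle here beyond careful bookkeeping of the Frobenius indices; the only substantive ingredient is the commutation of the Frobenius map past the $\mathbb{F}_q$-matrices $P_0$ and $P_1$, which is precisely where the normalisation $\mathcal{V}=\langle 1,\gamma\rangle_{\mathbb{F}_q}$ (and hence $P^T=P_0+\gamma P_1$ with $\mathbb{F}_q$-coefficients) from Section \ref{section4.2} is used. As a minor point, it is harmless that I pinned down $H_{sec}$ to be the Moore matrix of $\bm{b}$: any two parity-check matrices of $\mathcal{G}_{n,k}(\bm{a})$ differ by left multiplication by an element of $GL_{n-k}(\mathbb{F}_{q^m})$, so the spanned code $\mathcal{C}_{pub}^\perp$ does not depend on this choice.
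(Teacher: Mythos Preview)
Your argument is correct and is exactly the standard computation: choose $H_{sec}$ to be the Moore matrix of $\bm{b}$, write $P^T=P_0+\gamma P_1$ with $P_0,P_1$ over $\mathbb{F}_q$, and use that the Frobenius power fixes $\mathbb{F}_q$-entries to get $\bm{b}^{[i-1]}P_j=(\bm{b}P_j)^{[i-1]}$. Note that the present paper does not actually prove this lemma---it is quoted from \cite{coggia2020security} and explicitly ``introduced without proving''---so there is no in-paper proof to compare against; your write-up is precisely the argument one finds in the original source.
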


\begin{lemma}\cite{coggia2020security}\label{lemma4}
Under Assumption (1), we have that $\mathcal{C}_{pub}^\perp+{\mathcal{C}_{pub}^\perp}^{[1]}$ is spanned by 
\[ \bm{g}+\gamma\bm{h}\textnormal{ and }\bm{g}^{[1]},\bm{h}^{[1]},\cdots,\bm{g}^{[n-k-1]},\bm{h}^{[n-k-1]}\textnormal{ and } \bm{g}^{[n-k]}+\gamma^{[1]}\bm{h}^{[n-k]},\]
and
\[(\mathcal{C}_{pub}^\perp+{\mathcal{C}_{pub}^\perp}^{[1]})\cap({\mathcal{C}_{pub}^\perp}^{[1]}+{\mathcal{C}_{pub}^\perp}^{[2]})\]
is spanned by 
\[\bm{g}^{[1]}+\gamma^{[1]}\bm{h}^{[1]}\ \textnormal{and}\ \bm{g}^{[2]},\bm{h}^{[2]},\cdots,\bm{g}^{[n-k-1]},\bm{h}^{[n-k-1]}\ \textnormal{and}\ \bm{g}^{[n-k]}+\gamma^{[1]}\bm{h}^{[n-k]}.\]
\end{lemma}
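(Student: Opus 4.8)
The plan is to work with the generating set \eqref{expression2} for $\mathcal{C}_{pub}^\perp$ and compute its first-order Frobenius sum term by term. Writing the generators as $\bm{g}^{[i]}+\gamma\bm{h}^{[i]}$ for $0\le i\le n-k-1$, applying the Frobenius $[1]$ to each gives $\bm{g}^{[i+1]}+\gamma^{[1]}\bm{h}^{[i+1]}$ for $0\le i\le n-k-1$. So $\mathcal{C}_{pub}^\perp+{\mathcal{C}_{pub}^\perp}^{[1]}$ is spanned by the $2(n-k)$ vectors
\[
\bm{g}^{[i]}+\gamma\bm{h}^{[i]}\ (0\le i\le n-k-1)\quad\text{and}\quad \bm{g}^{[j]}+\gamma^{[1]}\bm{h}^{[j]}\ (1\le j\le n-k).
\]
First I would observe that for each index $i$ in the overlapping range $1\le i\le n-k-1$ we have both $\bm{g}^{[i]}+\gamma\bm{h}^{[i]}$ and $\bm{g}^{[i]}+\gamma^{[1]}\bm{h}^{[i]}$ in the span; since $\gamma\notin\mathbb{F}_q$ by Assumption (2), $\gamma-\gamma^{[1]}\neq 0$, so subtracting and rescaling recovers both $\bm{g}^{[i]}$ and $\bm{h}^{[i]}$ individually. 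This already puts $\bm{g}^{[1]},\bm{h}^{[1]},\dots,\bm{g}^{[n-k-1]},\bm{h}^{[n-k-1]}$ in the span, and modulo these the remaining two generators (the $i=0$ member of the first family and the $j=n-k$ member of the second) reduce exactly to $\bm{g}+\gamma\bm{h}$ and $\bm{g}^{[n-k]}+\gamma^{[1]}\bm{h}^{[n-k]}$, giving the claimed spanning set. The reverse inclusion — that these listed vectors genuinely lie in $\mathcal{C}_{pub}^\perp+{\mathcal{C}_{pub}^\perp}^{[1]}$ — is immediate from the same linear combinations run backwards.

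Next I would argue the dimension is exactly $2(n-k-1)+2=2(n-k)$, i.e.\ that the spanning set is a basis: this is where Assumption (1) enters. The vectors $\bm{g}^{[1]},\dots,\bm{g}^{[n-k-1]}$ span $\mathcal{G}_{n,n-k-1}(\bm{g}^{[1]})$, those plus $\bm{g}^{[n-k]}$-type and $\bm{g}$-type terms sit inside $\mathcal{G}_{n,n-k+2}(\bm{g})$, and similarly for $\bm{h}$; since $w_R(\bm g),w_R(\bm h)\ge n-k+2$ these Gabidulin codes have full dimension $n-k+2$, and $\mathcal{G}_{n,n-k+2}(\bm g)\cap\mathcal{G}_{n,n-k+2}(\bm h)=\{\bm 0\}$ forces the $\bm{g}$-part and $\bm{h}$-part of any dependency to vanish separately; a short Moore-matrix rank argument then kills each part. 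This pins down the first assertion.

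For the intersection statement, the cleanest route is to apply the first assertion with the shift $[1]$ built in: ${\mathcal{C}_{pub}^\perp}^{[1]}+{\mathcal{C}_{pub}^\perp}^{[2]}$ is the $[1]$-image of $\mathcal{C}_{pub}^\perp+{\mathcal{C}_{pub}^\perp}^{[1]}$, hence spanned by $\bm{g}^{[1]}+\gamma^{[1]}\bm{h}^{[1]}$, the vectors $\bm{g}^{[2]},\bm{h}^{[2]},\dots,\bm{g}^{[n-k]},\bm{h}^{[n-k]}$, and $\bm{g}^{[n-k+1]}+\gamma^{[2]}\bm{h}^{[n-k+1]}$. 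Intersecting with the basis from the first part, the common "pure" coordinates are exactly $\bm{g}^{[2]},\bm{h}^{[2]},\dots,\bm{g}^{[n-k-1]},\bm{h}^{[n-k-1]}$; the two mixed boundary vectors $\bm{g}^{[1]}+\gamma^{[1]}\bm{h}^{[1]}$ and $\bm{g}^{[n-k]}+\gamma^{[1]}\bm{h}^{[n-k]}$ are checked to lie in both sums by expressing them through the respective bases (for the first sum, $\bm{g}^{[1]}+\gamma^{[1]}\bm{h}^{[1]}$ is a combination of $\bm g^{[1]}$ and $\bm h^{[1]}$ which are in the first basis; for the second sum it is directly a listed generator, and symmetrically for the $[n-k]$ vector). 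A dimension count using Assumption (1) again — the left sum and right sum each have dimension $2(n-k)$, and one computes $\dim$ of their sum to be $2(n-k)+2$ via the same disjointness-of-Gabidulin-codes trick — forces the intersection to have dimension $2(n-k)-2=2(n-k-1)$, matching the proposed spanning set, which completes the proof.

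The main obstacle I anticipate is the linear-independence bookkeeping in the dimension counts: one must carefully track which Frobenius powers of $\bm g$ and of $\bm h$ appear, confirm that the relevant truncated Moore matrices have full rank (this is exactly where $w_R(\bm g),w_R(\bm h)\ge n-k+2$ is used, guaranteeing the Gabidulin codes in play are non-degenerate of the expected dimension), and invoke $\mathcal{G}_{n,n-k+2}(\bm g)\cap\mathcal{G}_{n,n-k+2}(\bm h)=\{\bm 0\}$ to decouple the $\bm g$- and $\bm h$-components of any hypothetical linear relation. Everything else is routine linear algebra on the explicit generators.
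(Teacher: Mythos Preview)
The paper does not prove this lemma; it explicitly states ``Now we introduce these two lemmas without proving'' and cites \cite{coggia2020security}. So there is no in-paper proof to compare against directly.

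Your argument is correct and is the natural one. Two remarks:

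\begin{itemize}
\item The linear-independence step you outline---separating any hypothetical relation into its $\mathcal{G}_{n,n-k+2}(\bm g)$-part and its $\mathcal{G}_{n,n-k+2}(\bm h)$-part and then invoking Assumption~(1) to kill each---does appear in the paper, not as a proof of Lemma~\ref{lemma4} but later in Section~\ref{section6} (analysis of Modification~\Rmnum{1}), where the authors verify that the $2(n-k)$ vectors in (\ref{expression1}) are linearly independent by exactly this decoupling trick. So that portion of your plan matches the paper's own reasoning line for line.
\item One small bookkeeping point: your subtraction step to isolate $\bm g^{[i]}$ and $\bm h^{[i]}$ in the overlap range needs $\gamma\neq\gamma^{[1]}$, i.e.\ $\gamma\notin\mathbb F_q$. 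This is a consequence of Assumption~(2), not Assumption~(1), even though the lemma as stated only names the latter; both assumptions are in force throughout Section~\ref{section4.2}, so the argument goes through, but it is worth flagging that the spanning-set half of the lemma genuinely uses~(2) while the dimension half uses~(1).
\end{itemize}

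Your handling of the intersection via the inclusion--exclusion dimension count (each Frobenius sum has dimension $2(n-k)$, their sum has dimension $2(n-k)+2$ by the same decoupling under Assumption~(1), hence the intersection has dimension $2(n-k)-2$) is clean and would work without difficulty.
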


\begin{remark}
Similar to Lemma \ref{lemma4}, it is easy to verify that
\begin{align}\label{equation1}
(\mathcal{C}_{pub}^\perp+{\mathcal{C}_{pub}^\perp}^{[1]})\cap({\mathcal{C}_{pub}^\perp}^{[1]}+{\mathcal{C}_{pub}^\perp}^{[2]})\cap\cdots\cap({\mathcal{C}_{pub}^\perp}^{[n-k-1]}+{\mathcal{C}_{pub}^\perp}^{[n-k]})
\end{align}
yields a code spanned by 
\begin{align}\label{equation2}
\bm{g}^{[n-k-1]}+\gamma^{[n-k-1]}\bm{h}^{[n-k-1]}\ \textnormal{and}\ \bm{g}^{[n-k]}+\gamma^{[1]}\bm{h}^{[n-k]}.
\end{align}
\end{remark}

The key point for the Coggia-Couvreur attack is that one can obtain (\ref{equation2}) by computing (\ref{equation1}). But if ${\mathcal{C}_{pub}^\perp}^{[i]}+{\mathcal{C}_{pub}^\perp}^{[i+1]}(0\leqslant i\leqslant n-k-1)$ happens to be the whole space $\mathbb{F}_{q^m}^n$, computing (\ref{equation2}) will lead to nothing but the whole space itself, which means that the Coggia-Couvreur attack will fail in this situation. Our first modification for Loidreau's cryptosystem is inspired by this observation. On the other hand, if $\mathcal{C}_{pub}^\perp$ does not contain the full code spanned by (\ref{expression2}), then one cannot obtain (\ref{equation2}) from (\ref{equation1}) either even if ${\mathcal{C}_{pub}^\perp}^{[i]}+{\mathcal{C}_{pub}^\perp}^{[i+1]}(0\leqslant i\leqslant n-k-1)$ is not the whole space. Modification \Rmnum{2} is based on this observation and this is really true according to our analysis in Section \ref{section6}.

\section{Our modifications}\label{section5}
In code-based cryptography, randomness is widely used in both the key generation and encryption procedures. In terms of the intersection of a given linear code and a randomly chosen linear space, we have the following proposition.
\begin{proposition}\label{proposition3}
Let $n,k,l$ be positive integers with $k+l<n$. Let $\mathcal{C}\subseteq{\mathbb{F}_{q^m}^n}$ be a linear code of dimension $k$, and $\mathcal{V}$ be a random linear subspace of $\mathbb{F}_{q^m}^n$ of dimension $l$. In terms of the intersection $\mathcal{C}\cap\mathcal{V}$, we have the following inequality
\begin{align*}
\textnormal{Pr}\{\mathcal{C}\cap\mathcal{V}=\{\bm{0}\}\}\geqslant 1-O\big(q^{-ms}\big),
\end{align*} 
where $s\geqslant 2$ is a positive integer.
\end{proposition}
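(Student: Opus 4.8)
The plan is to reduce the statement about the intersection $\mathcal{C}\cap\mathcal{V}$ to a counting argument over the Grassmannian of $l$-dimensional subspaces of $\mathbb{F}_{q^m}^n$. First I would fix a parity-check-type description: let $H\in\mathcal{M}_{(n-k),n}(\mathbb{F}_{q^m})$ be a full-rank matrix whose row space is $\mathcal{C}^\perp$, so that $\bm{v}\in\mathcal{C}$ if and only if $H\bm{v}^T=\bm{0}$. If $\mathcal{V}$ is spanned by the rows of a random full-rank matrix $B\in\mathcal{M}_{l,n}(\mathbb{F}_{q^m})$, then a nonzero vector of $\mathcal{C}\cap\mathcal{V}$ corresponds to a nonzero $\bm{u}\in\mathbb{F}_{q^m}^l$ with $H(\bm{u}B)^T=\bm{0}$, i.e. a nonzero element of $\ker(HB^T)$. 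Thus $\mathcal{C}\cap\mathcal{V}=\{\bm{0}\}$ exactly when the $(n-k)\times l$ matrix $HB^T$ has rank $l$ (note $l<n-k$ since $k+l<n$).

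Next I would estimate the probability that $HB^T$ is rank-deficient when $B$ is chosen uniformly at random. Because $H$ has rank $n-k$, for a uniformly random $B$ the product $HB^T$ behaves like a uniformly random $(n-k)\times l$ matrix over $\mathbb{F}_{q^m}$ (the map $B\mapsto HB^T$ is surjective with fibers of equal size, since one can complete $H$ to an invertible matrix). So it suffices to bound the probability that a uniformly random $(n-k)\times l$ matrix over $\mathbb{F}_{q^m}$ fails to have full column rank $l$. A standard union bound over column dependencies gives
\begin{align*}
\textnormal{Pr}\{\textnormal{rank}(HB^T)<l\}\leqslant\sum_{i=0}^{l-1}\frac{q^{mi}}{q^{m(n-k)}}=\sum_{j=n-k-l+1}^{n-k}q^{-mj},
\end{align*}
which is $O(q^{-m(n-k-l+1)})$. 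Since $k+l<n$ forces $n-k-l\geqslant 1$, we get $n-k-l+1\geqslant 2$, so the exponent $s:=n-k-l+1\geqslant 2$ and the bound is $O(q^{-ms})$ with $s\geqslant 2$, which is precisely the claimed inequality after taking complements.

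The main obstacle, and the step requiring the most care, is justifying that $HB^T$ is (close to) uniformly distributed — or at least that its rank statistics are dominated by those of a uniform matrix — when $B$ ranges over a natural distribution on $l$-dimensional subspaces (e.g. uniform over full-rank $B$, or over the Grassmannian). One clean way is: sample $B$ with i.i.d.\ uniform entries in $\mathbb{F}_{q^m}$; then $HB^T$ is exactly uniform, and separately the event that $\langle B\rangle$ has dimension strictly less than $l$ already contributes only $O(q^{-m(n-l+1)})$, which is absorbed into the error term. The remaining bookkeeping — checking that conditioning on $B$ having rank $l$ only changes probabilities by a $(1+O(q^{-m}))$ factor, and that $q^m\geqslant q\geqslant 2$ makes the geometric tails behave — is routine. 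I would also remark that the proposition is stated with a fixed $\mathcal{C}$ but the bound is uniform in $\mathcal{C}$, so no genericity assumption on $\mathcal{C}$ is needed; only the dimension count $k+l<n$ matters.
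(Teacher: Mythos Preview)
Your argument is correct, but it follows a different path from the paper's proof. The paper works directly on the Grassmannian: it computes, via Gaussian binomial coefficients, the number $N_1$ of $l$-dimensional subspaces meeting $\mathcal{C}$ trivially and the total number $N_2$ of $l$-dimensional subspaces, obtains the exact product
\[
\frac{N_1}{N_2}=\prod_{i=0}^{l-1}\Big(1-\frac{q^{mk}-1}{q^{m(n-i)}-1}\Big),
\]
bounds it below by $\bigl(1-\frac{q^{mk}-1}{q^{m(n-l+1)}-1}\bigr)^l$, and finishes with a first-order Taylor expansion to read off $s=n-k-l+1\geqslant 2$. Your route instead reformulates $\mathcal{C}\cap\mathcal{V}=\{\bm{0}\}$ as the full-rank condition $\rank(HB^T)=l$ and uses a union bound on random matrices, handling the passage from i.i.d.\ entries to the uniform Grassmannian distribution by absorbing an $O(q^{-m(n-l+1)})$ correction. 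Both arrive at exactly the same exponent $s=n-k-l+1$. The paper's approach has the virtue of giving the probability as an explicit closed product before estimating; your approach is more elementary (no $q$-binomials, just a union bound) and makes transparent why the bound is uniform over all codes $\mathcal{C}$ of the given dimension --- a point you rightly highlight at the end.
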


\begin{proof}
Exploiting the Gaussian  coefficient, the number of $l$-dimensional subspaces of $\mathbb{F}_{q^m}^n$ linearly independent of $\mathcal{C}$ can be computed as
\[N_1=
\prod_{i=0}^{l-1}\frac{(q^m)^n-(q^m)^{k+i}}{(q^m)^l-(q^m)^i}=\prod_{i=0}^{l-1}\frac{q^{mn}-q^{m(k+i)}}{q^{ml}-q^{mi}}.
\]
Similarly, the number of all $l$-dimensional subspaces of $\mathbb{F}_{q^m}^n$ can be computed as
\[N_2=
\prod_{i=0}^{l-1}\frac{(q^m)^n-(q^m)^i}{(q^m)^l-(q^m)^i}=\prod_{i=0}^{l-1}\frac{q^{mn}-q^{mi}}{q^{ml}-q^{mi}}.
\]
Then the target probability $\textnormal{Pr}\{\mathcal{C}\cap\mathcal{V}=\{\bm{0}\}\}$ can be computed as 
\begin{align}\label{equation}
\frac{N_1}{N_2}&=\prod_{i=0}^{l-1}\frac{q^{mn}-q^{m(k+i)}}{q^{mn}-q^{mi}}\notag\\
&=\prod_{i=0}^{l-1}\frac{q^{mn}-q^{mi}-q^{mk+mi}+q^{mi}}{q^{mn}-q^{mi}}\notag\\
&=\prod_{i=0}^{l-1}\Big(1-\frac{q^{mk}-1}{q^{m(n-i)}-1}\Big)\notag\\
&\geqslant\Big(1-\frac{q^{mk}-1}{q^{m(n-l+1)}-1}\Big)^l.
\end{align}
By Taylor expansion, the right hand side of (\ref{equation}) can be expressed as
\begin{align*}
\Big(1-\frac{q^{mk}-1}{q^{m(n-l+1)}-1}\Big)^l&=1-l\cdot \frac{q^{mk}-1}{q^{m(n-l+1)}-1}+o\Big(\frac{q^{mk}-1}{q^{m(n-l+1)}-1}\Big)\\
&=1-O\Big(q^{-m(n-k-l+1)}\Big).
\end{align*}
Let $s=n-k-l+1$, apparently $s\geqslant 2$ because of $k+l<n$. Finally we have $\textnormal{Pr}\{\mathcal{C}\cap\mathcal{V}=\{\bm{0}\}\}\geqslant 1-O\big(q^{-ms}\big)$. This completes the proof.
\end{proof}

\begin{remark}
Proposition \ref{proposition3} states a fact that for a linear code $\mathcal{C}$ and a randomly chosen linear space $\mathcal{V}$, we have that $\mathcal{C}\cap\mathcal{V}=\{\bm{0}\}$ holds with high probability. Meanwhile, it is reasonable to conclude that for a $k\times n$ full-rank matrix $H$ and a randomly chosen $l\times n$ full-rank matrix $A$ with $k+l<n$, the block matrix $\begin{pmatrix}A\\H\end{pmatrix}$ is of full rank with high probability.
\end{remark}

\subsection{Description of Modification \Rmnum{1}}
Let $\mathcal{G}$ be an $[n,k]$ Gabidulin code generated by $\bm{a}\in\mathbb{F}_{q^m}^n$ with $w_R(\bm{a})=n$. Denote by $H$ a parity-check matrix of $\mathcal{G}$. For a positive integer $l\geqslant k-\frac{n}{2}$, randomly choose an $l\times n$ full-rank matrix $A$ over $\mathbb{F}_{q^m}$ and set $H_{sub}=\begin{pmatrix}A\\H\end{pmatrix}$. Let $G_{sub}$ be a generator matrix of $\langle H_{sub}\rangle^\perp$. By Proposition \ref{proposition3}, $H_{sub}$ has rank $k+l$ with high probability.  It would be well if we assume that $H_{sub}$ is of full rank, otherwise we rechoose the matrix $A$. Apparently $G_{sub}$ spans a subcode of $\mathcal{G}$ of dimension $k'=k-l$. For a positive integer $\lambda\ll m$, let $\mathcal{V}\subseteq \mathbb{F}_{q^m}$ be an $\mathbb{F}_q$-linear space of dimension $\lambda$.

\begin{itemize}
\item Key generation
\item[] Let $P\in GL_n(\mathbb{F}_{q^m})$ with entries contained in $\mathcal{V}$. Without loss of generality, we assume that the submatrix of $G_{sub}P^{-1}$ formed by the first $k'$ columns is invertible. Choose a matrix $S\in GL_{k'}(\mathbb{F}_{q^m})$ to change $G_{pub}=SG_{sub}P^{-1}$ into systematic form. We publish $(G_{pub},t)$ as the public key where $t=\lfloor\frac{n-k}{2\lambda}\rfloor$, and keep $(\bm{a},P)$ as the secret key.
\item Encryption
\item[] For a plaintext $\bm{m}\in\mathbb{F}_{q^m}^{k'}$, randomly choose $\bm{e}\in\mathbb{F}_{q^m}^n$ with $w_R(\bm{e})=t$. Then the ciphertext corresponding to $\bm{m}$ is computed as $\bm{c}=\bm{m}G_{pub}+\bm{e}$.
\item Decryption
\item[] For a ciphertext $\bm{c}$, compute $\bm{c}'=\bm{c}P=\bm{m}SG_{sub}+\bm{e}P$. Since $w_R(\bm{e}P)\leqslant w_R(\bm{e})\cdot \lambda\leqslant \lfloor\frac{n-k}{2}\rfloor$. Applying the decoding procedure of  $\mathcal{G}$ to $\bm{c}'$ will lead to $\bm{e}'=\bm{e}P$, then we have $\bm{e}=\bm{e}'P^{-1}$. The restriction of $\bm{c}-\bm{e}$ to the first $k'$ coordinates will be the plaintext $\bm{m}$.
\end{itemize}

\begin{remark}\label{remark4}
According to the analysis in Section \ref{section4.2}, we can always assume that $1\in\mathcal{V}$. If $\lambda=1$, there will be $\mathcal{V}=\mathbb{F}_q$ and $P^{-1}\in GL_n(\mathbb{F}_q)$. In this situation, $G_{pub}$ spans a subcode of $\mathcal{G}$. Then one can exploit the $r$-Frobenius weak attack \cite{horlemann2016} to completely break this modification. To prevent this attack, we should make sure that $\lambda\geqslant 2$ in Modification \Rmnum{1}. 
\end{remark}

\subsection{Description of Modification \Rmnum{2}}
Let $\mathcal{G}$ be an $[n,k]$ Gabidulin code generated by $\bm{a}\in\mathbb{F}_{q^m}^n$ with $w_R(\bm{a})=n$. Denote by $G$ a generator matrix of $\mathcal{G}$. For a positive integer $l\ll \min\{k,n-k\}$, randomly choose $M\in\mathcal{M}_{k,n}(\mathbb{F}_{q^m})$ with $\textnormal{Clr}_q(M)=l$ and let $G_M=G+M$. It is easy to see that $G_M$ is of full rank. Indeed, if there exists $\bm{x}\in\mathbb{F}_{q^m}^k$ such that $\bm{x}G_M=\bm{0}$, then we have $\bm{x}G\in\langle M\rangle$. By Proposition \ref{proposition4}, the maximum rank weight of $\langle M\rangle$ does not exceed $l$. Together with $d(\mathcal{G})=n-k+1\gg l$  , we have $\bm{x}G=\bm{0}$ and hence $\bm{x}=\bm{0}$. For a positive integer $\lambda\ll m$, let $\mathcal{V}\subseteq \mathbb{F}_{q^m}$ be an $\mathbb{F}_q$-linear space of dimension $\lambda$.
\begin{itemize}
\item Key generation
\item[]Let $P\in GL_n(\mathbb{F}_{q^m})$ with entries contained in $\mathcal{V}$.  Without loss of generality, we assume that the submatrix of $G_MP^{-1}$ formed by the first $k$ columns is invertible. Choose a matrix $S\in GL_k(\mathbb{F}_{q^m})$ to change $G_{pub}=SG_MP^{-1}$ into systematic form.  We publish $(G_{pub},t)$ as the public key where $t=\lfloor\frac{n-k-2l}{2\lambda}\rfloor$, and keep $(S,G,P)$ as the secret key. 
\item Encryption
\item[]For a plaintext $\bm{m}\in\mathbb{F}_{q^m}^k$, randomly choose a vector $\bm{e}\in\mathbb{F}_{q^m}^n$ with $w_R(\bm{e})=t$. Then the ciphertext corresponding to $\bm{m}$ is computed as $\bm{c}=\bm{m}G_{pub}+\bm{e}$.
\item Decryption
\item[]For a ciphertext $\bm{c}$, compute $\bm{c}'=\bm{c}P=\bm{m}SG+\bm{m}SM+\bm{e}P$. Since 
\begin{align*}
w_R(\bm{m}SM+\bm{e}P)\leqslant w_R(\bm{m}SM)+w_R(\bm{e}P)\leqslant l+\lambda t\leqslant \lfloor\frac{n-k}{2}\rfloor,
\end{align*}
applying the decoding procedure of $\mathcal{G}$ to $\bm{c}'$ will lead to $\bm{m}SG$. Then the plaintext $\bm{m}$ can be recovered by solving a linear system with a complexity of $O(n^3)$.
\end{itemize}

\begin{remark}
Similar to the analysis in Remark \ref{remark4}, we should make sure that $\lambda\geqslant 2$ in this modification. Otherwise, Modification \Rmnum{2} can be reduced to the GPT cryptosystem that has been completely broken.
\end{remark}

\section{Security analysis}\label{section6}
In general, there are two types of attacks on code-based cryptosystems, namely the generic attack and structural attack.

\textbf{Generic attacks}. These attacks aim to recover the plaintext directly from the ciphertext when nothing but the public key is known. In the context of code-based cryptography, generic attacks are involved with the problem of decoding general linear codes or equivalently the syndrome decoding problem, both of which are believed to be very difficult by the community. 
In the paper \cite{gaborit2016complexity}, the authors proposed two generic attacks on the rank syndrome decoding (RSD) proplem, which lay a foundation for the security of rank metric code-based cryptography. 

Suppose $\mathcal{C}$ is an $[n,k]$ rank metric code over $\mathbb{F}_{q^m}$, correcting up to $t$ rank errors. Let $\bm{y}=\bm{c}+\bm{e}$, where $\bm{c}$ is a codeword in $\mathcal{C}$ and $\bm{e}$ is a random vector with $w_R(\bm{e})=t$. The first attack in \cite{gaborit2016complexity} is combinatorial and permits to recover the error $\bm{e}$ with 
\[\min\{O((n-k)^3m^3q^{t\lfloor (km)/ n\rfloor}),O((n-k)^3m^3q^{(t-1)\lfloor ((k+1)m)/ n\rfloor})\}\] operations in $\mathbb{F}_q$. The second attack in \cite{gaborit2016complexity} is algebraic and shows that in the case of $\lceil((t+1)(k+1)-(n+1))/t\rceil\leqslant k$, the RSD problem can be solved with an average complexity of 
\[O(t^3k^3q^{t \lceil((t+1)(k+1)-(n+1))/t\rceil})\]
in $\mathbb{F}_q$. Apparently both of these two attacks need exponential time to recover the plaintext without knowing the secret key. 
 
\textbf{Structural attacks}. These attacks aim to recover the structure of the secret code from a random-looking public matrix. In fact, recovering the structure amounts to obtaining the secret key in some sense, which means that the cryptosystem will be completely broken in this situation. In \cite{loidreau2017new}, Loidreau argued that his cryptosystem could resist the invariant subspace attack, also known as Overbeck's attack. Since our modifications exploit the same masking technique to disguise the structure of the secret code, naturally we believe that our modifications can also prevent Overbeck's attack. Therefore, in the remaining part of this section we only consider the security against the Coggia-Couvreur attack. 

\subsection{Analysis of Modification \Rmnum{1}}
Before giving the analysis, we shall introduce the following theorem. This theorem states a simple fact that if $\mathcal{C}\subseteq\mathbb{F}_{q^m}^m$ is a linear code with a generator matrix $G$, then its $s$-th Frobenius power $\mathcal{C}^{[s]}$ is also a linear code over $\mathbb{F}_{q^m}$ and has $G^{[s]}$ as a generator matrix.
\begin{theorem}\label{theorem2}
Let $\mathcal{C}\subseteq\mathbb{F}_{q^m}^n$ be an $[n,k]$ linear code that has $G$ as a generator matrix. For any integer $s$, $\mathcal{C}^{[s]}$ is also an $[n,k]$ linear code over $\mathbb{F}_{q^m}$ and has $G^{[s]}$ as a generator matrix.
\end{theorem}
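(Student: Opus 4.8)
The plan is to reduce everything to the single fact that the $q^s$-power map $\varphi_s:x\mapsto x^{[s]}$ is a field automorphism of $\mathbb{F}_{q^m}$: it is additive and multiplicative because the Frobenius is a ring homomorphism, and it is bijective even for arbitrary (possibly negative) integers $s$ because $\varphi_m$ is the identity on $\mathbb{F}_{q^m}$, so $\varphi_s=\varphi_{s\bmod m}$ lies in the Galois group and has inverse $\varphi_{-s}$. Applying $\varphi_s$ coordinate-wise to $\mathbb{F}_{q^m}^n$ preserves addition and scalar operations in the twisted sense recorded by the identity $\big(\sum_{i}c_i\bm{g}_i\big)^{[s]}=\sum_{i}c_i^{[s]}\bm{g}_i^{[s]}$ for $c_i\in\mathbb{F}_{q^m}$, where $\bm{g}_1,\dots,\bm{g}_k$ denote the rows of $G$, so that $\bm{g}_1^{[s]},\dots,\bm{g}_k^{[s]}$ are exactly the rows of $G^{[s]}$. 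I will use this identity repeatedly.

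First I would verify that $\mathcal{C}^{[s]}$ is an $\mathbb{F}_{q^m}$-subspace of $\mathbb{F}_{q^m}^n$ (it has length $n$ trivially). Take $\bm{u}=\bm{x}^{[s]}$, $\bm{v}=\bm{y}^{[s]}$ with $\bm{x},\bm{y}\in\mathcal{C}$ and $\alpha,\beta\in\mathbb{F}_{q^m}$; writing $\alpha=a^{[s]}$, $\beta=b^{[s]}$ with $a=\alpha^{[-s]}$, $b=\beta^{[-s]}$ (possible since $\varphi_s$ is onto), one gets $\alpha\bm{u}+\beta\bm{v}=(a\bm{x}+b\bm{y})^{[s]}$, and $a\bm{x}+b\bm{y}\in\mathcal{C}$, so $\alpha\bm{u}+\beta\bm{v}\in\mathcal{C}^{[s]}$. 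Next I would show $\langle G^{[s]}\rangle=\mathcal{C}^{[s]}$: any row combination $\sum_i d_i\bm{g}_i^{[s]}$ equals $\big(\sum_i d_i^{[-s]}\bm{g}_i\big)^{[s]}$, the $[s]$-th power of an element of $\mathcal{C}$; conversely every $\bm{x}^{[s]}$ with $\bm{x}=\sum_i c_i\bm{g}_i\in\mathcal{C}$ equals $\sum_i c_i^{[s]}\bm{g}_i^{[s]}\in\langle G^{[s]}\rangle$ by the displayed identity.

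Finally, for the dimension I would check that the rows of $G^{[s]}$ are $\mathbb{F}_{q^m}$-linearly independent. If $\sum_i d_i\bm{g}_i^{[s]}=\bm{0}$, then $\big(\sum_i d_i^{[-s]}\bm{g}_i\big)^{[s]}=\bm{0}$; since $\bm{z}^{[s]}=\bm{0}$ forces $\bm{z}=\bm{0}$ (each coordinate maps to $0$ only if it is $0$, as $\varphi_s$ is injective), we obtain $\sum_i d_i^{[-s]}\bm{g}_i=\bm{0}$, and linear independence of the rows of $G$ gives $d_i^{[-s]}=0$, hence $d_i=0$, for every $i$. Therefore $\dim\mathcal{C}^{[s]}=k$, and $\mathcal{C}^{[s]}$ is an $[n,k]$ linear code with generator matrix $G^{[s]}$. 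The only step requiring a moment's care — and the natural place a reader could hesitate — is the invertibility of $\varphi_s$ for every integer $s$, which is precisely where one uses $\varphi_s^{m}=\mathrm{id}$ on $\mathbb{F}_{q^m}$; beyond that the argument is a mechanical transfer of the Frobenius ring-homomorphism identities to vectors, so I do not anticipate any real obstacle.
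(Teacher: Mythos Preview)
Your proposal is correct and follows essentially the same route as the paper's proof: both directions of $\mathcal{C}^{[s]}=\langle G^{[s]}\rangle$ are obtained from the Frobenius identity $(\bm{x}G)^{[s]}=\bm{x}^{[s]}G^{[s]}$ together with the invertibility of $\varphi_s$ (the paper writes the inverse as $[m-s]$ where you write $[-s]$), and full rank of $G^{[s]}$ is deduced from full rank of $G$ via the same pull-back. Your explicit verification that $\mathcal{C}^{[s]}$ is an $\mathbb{F}_{q^m}$-subspace is an extra step the paper omits, but otherwise the arguments coincide.
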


\begin{proof}
On the one hand. For any $\bm{u}\in\mathcal{C}^{[s]}$, there exists $\bm{x}\in\mathbb{F}_{q^m}^k$ such that $\bm{u}=(\bm{x}G)^{[s]}=\bm{x}^{[s]}G^{[s]}\in\langle G^{[s]}\rangle$, then we have 
\[\mathcal{C}^{[s]}\subseteq\langle G^{[s]}\rangle.\]
On the other hand. For any $\bm{v}\in\langle G^{[s]}\rangle$, there exists $\bm{x}\in\mathbb{F}_{q^m}^k$ such that $\bm{v}=\bm{x}G^{[s]}=(\bm{x}^{[m-s]}G)^{[s]}\in\mathcal{C}^{[s]}$, then we have 
\[\langle G^{[s]}\rangle\subseteq\mathcal{C}^{[s]}.\]
Hence we have $\mathcal{C}^{[s]}=\langle G^{[s]}\rangle$.   

It remains to prove that $G^{[s]}$ is of full rank. Suppose there exists $\bm{x}\in\mathbb{F}_{q^m}^k$ such that $\bm{x}G^{[s]}=(\bm{x}^{[m-s]}G)^{[s]}=\bm{0}$, then we have $\bm{x}^{[m-s]}G=\bm{0}$ and consequently $\bm{x}=\bm{x}^{[m-s]}=\bm{0}$ because of $G$ being of full rank. This concludes the proof.
\end{proof}

Now we show that ${\mathcal{C}_{pub}^\perp}^{[i]}+{\mathcal{C}_{pub}^\perp}^{[i+1]}\,(0\leqslant i\leqslant n-k-1)$ is exactly the whole space $\mathbb{F}_{q^m}^n$, namely all these $n-k$ codes have dimension $n$. By Theorem \ref{theorem2}, it suffices to consider the case of $\mathcal{C}_{pub}^\perp+{\mathcal{C}_{pub}^\perp}^{[1]}$. 

Let $H_{pub}$ be a parity-check matrix of $\mathcal{C}_{pub}$, then we have $H_{pub}=H_{sub}P^T$ and
\[\mathcal{C}_{pub}^\perp=\langle H_{sub}P^T\rangle=\langle HP^T\rangle+\langle AP^T\rangle.\]
Hence
\[\mathcal{C}_{pub}^\perp+{\mathcal{C}_{pub}^\perp}^{[1]}=\langle HP^T\rangle+\langle HP^T\rangle^{[1]}+\langle AP^T\rangle+\langle AP^T\rangle^{[1]}.\]
According to Lemma \ref{lemma4}, $\langle HP^T\rangle+\langle HP^T\rangle^{[1]}$ is spanned by
\begin{align}\label{expression1}
\bm{g}+\gamma\bm{h}\ \textnormal{and}\ \bm{g}^{[1]},\bm{h}^{[1]},\cdots,\bm{g}^{[n-k-1]},\bm{h}^{[n-k-1]}\ \textnormal{and}\ \bm{g}^{[n-k]}+\gamma^{[1]}\bm{h}^{[n-k]},
\end{align}
where $\gamma,\bm{g}$ and $\bm{h}$ are defined as in Section \ref{section4}.

Note that these $2(n-k)$ vectors in (\ref{expression1}) are linearly independent over $\mathbb{F}_{q^m}$. Indeed, if there exist $x_i,y_i\in\mathbb{F}_{q^m}\,(0\leqslant i\leqslant n-k-1)$ such that 
\[x_0(\bm{g}+\gamma\bm{h})+y_0(\bm{g}^{[n-k]}+\gamma^{[1]}\bm{h}^{[n-k]})+\sum_{i=1}^{n-k-1}x_i\bm{g}^{[i]}+\sum_{i=1}^{n-k-1}y_i\bm{h}^{[i]}=\bm{0}.\]
Then we have
\[y_0\bm{g}^{[n-k]}+\sum_{i=0}^{n-k-1}x_i\bm{g}^{[i]}=-x_0\gamma\bm{h}-y_0\gamma^{[1]}\bm{h}^{[n-k]}-\sum_{i=1}^{n-k-1}y_i\bm{h}^{[i]}.\]
Apparently $y_0\bm{g}^{[n-k]}+\sum_{i=0}^{n-k-1}x_i\bm{g}^{[i]}\in\mathcal{G}_{n,n-k+2}(\bm{g})$ and $-x_0\gamma\bm{h}-y_0\gamma^{[1]}\bm{h}^{[n-k]}-\sum_{i=1}^{n-k-1}y_i\bm{h}^{[i]}\in\mathcal{G}_{n,n-k+2}(\bm{h})$. Hence $x_i=y_i=0\,(0\leqslant i\leqslant n-k-1)$ because of Assumption (1).

By Proposition \ref{proposition1}, we have that $\dim(\langle AP^T\rangle+\langle AP^T\rangle^{[1]})=2l$ holds with extremely high probability. Together with Proposition \ref{proposition3}, we have that $\dim(\mathcal{C}_{pub}^\perp+{\mathcal{C}_{pub}^\perp}^{[1]})=n=\min\{2(n-k+l),n\}$. This means that by computing the intersection (\ref{equation1}) the adversary can obtain nothing but the whole space and hence the Coggia-Couvreur attack will fail in this situation.

\subsection{Analysis of Modification \Rmnum{2}}
Since $\textnormal{Clr}_q(M)=l$, there must be $1\leqslant\rank(M)\leqslant l$. Assume that $\rank(M)=l'$, apparently we have $\dim(\langle M\rangle)=l'\leqslant l$. By Proposition \ref{proposition4}, we have $w_R(\bm{v})\leqslant l$ for any $\bm{v}\in \langle M\rangle$. Together with $d(\mathcal{G})=n-k+1\gg l$, we have $\langle M\rangle\cap\mathcal{G}=\{\bm{0}\}$. 

Let $\mathcal{C}_{pub}=\langle G_{pub}\rangle=\langle SG_MP^{-1}\rangle$, then a parity-check matrix for $\mathcal{C}_{pub}$ can be written as $H_{pub}=H_MP^T$, where $H_M$ is an $(n-k)\times n$ full-rank matrix such that $SG_MH_M^T=O$. It is easy to see that $\langle H_M\rangle$ contains a subcode of $\mathcal{G}^\perp$ of dimension $n-k-l'$. Hence $\mathcal{C}_{pub}^\perp$ contains a subcode of $\mathcal{C}_1$ of dimension $n-k-l'$, where $\mathcal{C}_1$ is spanned by 
\[\bm{g}+\gamma\bm{h},\bm{g}^{[1]}+\gamma\bm{h}^{[1]},\cdots,\bm{g}^{[r]}+\gamma\bm{h}^{[r]},\ \textnormal{where $r=n-k-1$}.\]
Similarly ${\mathcal{C}_{pub}^\perp}^{[1]}$ contains a subcode of $\mathcal{C}_2$ of dimension $n-k-l'$, where $\mathcal{C}_2$ is spanned by 
\[\bm{g}^{[1]}+\gamma^{[1]}\bm{h}^{[1]},\bm{g}^{[2]}+\gamma^{[1]}\bm{h}^{[2]},\cdots,\bm{g}^{[r+1]}+\gamma^{[1]}\bm{h}^{[r+1]}.\] 
Finally we have that $\mathcal{C}_{pub}^\perp+{\mathcal{C}_{pub}^\perp}^{[1]}$ contains a subcode of  $\mathcal{C}=\mathcal{C}_1+\mathcal{C}_2$ of dimension at most $2(n-k-l')$, where $\mathcal{C}$ is spanned by 
\[\bm{g}+\gamma\bm{h}\ \textnormal{and}\ \bm{g}^{[1]},\bm{h}^{[1]},\cdots, \bm{g}^{[r]},\bm{h}^{[r]}\ \textnormal{and}\ \bm{g}^{[r+1]}+\gamma^{[1]}\bm{h}^{[r+1]}.\]

In the Coggia-Couvreur attack, the adversary can obtain (\ref{equation2}) by computing (\ref{equation1}). Our analysis shows that the adversary cannot perform the same operation on Modification \Rmnum{2} to obtain (\ref{equation2}). Here we demonstrate this point with the method of reduction to absurdity.

Suppose that
\begin{align}\label{contradiction}
\langle \bm{g}^{[r]}+\gamma^{[r]}\bm{h}^{[r]},\bm{g}^{[r+1]}+\gamma^{[1]}\bm{h}^{[r+1]}\rangle\subseteq\bigcap_{i=0}^r ({\mathcal{C}_{pub}^\perp}^{[i]}+{\mathcal{C}_{pub}^\perp}^{[i+1]}).
\end{align}
Then for any $0\leqslant i\leqslant r$, we have 
\begin{equation}\label{inverse}
\bm{g}^{[r]}+\gamma^{[r]}\bm{h}^{[r]},\bm{g}^{[r+1]}+\gamma^{[1]}\bm{h}^{[r+1]}\in{\mathcal{C}_{pub}^\perp}^{[i]}+{\mathcal{C}_{pub}^\perp}^{[i+1]}.
\end{equation}
Applying the inverse of the $i$-th Frobenius map to both sides of (\ref{inverse}), there will be
\[\bm{g}^{[r-i]}+\gamma^{[r-i]}\bm{h}^{[r-i]},\bm{g}^{[r-i+1]}+\gamma^{[1-i]}\bm{h}^{[r-i+1]}\in\mathcal{C}_{pub}^\perp+{\mathcal{C}_{pub}^\perp}^{[1]},\]
or equivalently
\[\bm{g}+\gamma\bm{h}\ \textnormal{and}\ \bm{g}^{[1]},\bm{h}^{[1]},\cdots, \bm{g}^{[r]},\bm{h}^{[r]}\ \textnormal{and}\ \bm{g}^{[r+1]}+\gamma^{[1]}\bm{h}^{[r+1]}\in\mathcal{C}_{pub}^\perp+{\mathcal{C}_{pub}^\perp}^{[1]}.\] 
This implies that $\mathcal{C}\subseteq \mathcal{C}_{pub}^\perp+{\mathcal{C}_{pub}^\perp}^{[1]}$, which conflicts with the previous conclusion that $\mathcal{C}_{pub}^\perp+{\mathcal{C}_{pub}^\perp}^{[1]}$ contains a subcode of $\mathcal{C}$ of dimension at most $2(n-k-l')$.

Hence the assumption (\ref{contradiction}) cannot be true and the adversary cannot recover $(\ref{equation2})$ from $(\ref{equation1})$ as the Coggia-Couvreur attack on Loidreau's cryptosystem. Therefore the Coggia-Couvreur attack does not work on Modification \Rmnum{2}.

\section{Parameters and key size}\label{section7}
In Table \ref{table1} we give some parameters suggested for different security levels, and make a comparison on performance with Loidreau's original scheme in the case of $k\leqslant \frac{n}{2}$. When considering the parameters, we exploit the complexity assessment of generic attacks given in Section \ref{section6}. 

In Modification \Rmnum{1}, the public key is a systematic generator matrix of an $[n,k-l]$ rank metric code, resulting in a public-key size of $(k-l)(n-k+l)\cdot m\cdot\log_2(q)$ bits. In Modification \Rmnum{2}, the public key is a systematic generator matrix of an $[n,k]$ rank metric code, resulting in a public-key size of $k(n-k)\cdot m\cdot\log_2(q)$ bits. As for information rates, this value is $(k-l)/n$ for Modification \Rmnum{1}, and $k/n$ for Modification \Rmnum{2} respectively. For the concrete instances, we consider the case where $q=3$ and $\lambda=2$. It is not difficult to see from Table \ref{table1} that our modifications have obvious advantages over Loidreau's scheme in both public-key sizes and information rates.
\begin{table}[h!]
\setlength{\abovecaptionskip}{-0.2cm}
\setlength{\belowcaptionskip}{-0.2cm}
\begin{center}
\begin{tabular}{c|c|c|c|c}
\hline\rule{0pt}{10pt}       
Instance & Parameters & Public-key Size & Inf. Rate & Sec. \\ 
\hline\rule{0pt}{10pt}
& m=37, n=37, k=17 & 4,611 & 0.46 & 128 \\  \rule{0pt}{10pt}
Loidreau's system & m=45, n=45, k=21 & 8,425 & 0.47 & 192 \\   \rule{0pt}{10pt}
& m=52, n=52, k=24 & 12,857 & 0.46 & 256 \\ 
\hline\rule{0pt}{10pt}
& m=42, n=42, k=23, l=2 & 3,670 & 0.50 & 128 \\  \rule{0pt}{10pt}
Modification \Rmnum{1} & m=48, n=48, k=25, l=1 & 5,478 & 0.50 & 192 \\   \rule{0pt}{10pt}
& m=56, n=56, k=29, l=1 & 8,698 & 0.50 & 256 \\
\hline\rule{0pt}{10pt}
& m=44, n=44, k=30, l=1 & 3,661 & 0.68 & 128 \\  \rule{0pt}{10pt}
Modification \Rmnum{2} & m=51, n=51, k=33, l=1 & 6,002 & 0.65 & 192 \\   \rule{0pt}{10pt}
& m=57, n=57, k=35, l=1 & 8,696 & 0.61& 256 \\
\hline
\end{tabular}
\end{center}
\caption{Comparison on public-key sizes (in bytes) and information rates with Loidreau's scheme for different security levels.}\label{table1}
\end{table}

\begin{table}[h!]
\setlength{\abovecaptionskip}{-0.2cm}
\setlength{\belowcaptionskip}{-0.2cm}
\begin{center}
\begin{tabular}{c|r|r|r}
\hline\rule{0pt}{10pt}       
Instance & 128 bits & 192 bits & 256 bits\\
\hline\rule{0pt}{10pt}
HQC & 2,249 & 4,522 & 7,245\\
\hline\rule{0pt}{10pt}
BIKE & 1,540 & 3,082 & 5,121\\
\hline\rule{0pt}{10pt}
Classic McEliece & 261,120 & 524,160 & 1,044,992\\
\hline\rule{0pt}{10pt}
NTS-KEM & 319,488 & 929,760 & 1,419,704\\
\hline\rule{0pt}{10pt}
Modification \Rmnum{1} & 3,693 & 5,478 & 8,698\\
\hline\rule{0pt}{10pt}
Modification \Rmnum{2} & 3,661 & 6,002 & 8,696\\
\hline
\end{tabular}
\end{center}
\caption{Comparison on public-key sizes (in bytes) with some other cryptosystems.}\label{table2}
\end{table}

In Table \ref{table2}, we make a comparison on public-key sizes with some other code-based cryptosystems that were selected as the third round candidates of the NIST PQC Standardization Process. These candidates are HQC\cite{aguilar2020}, BIKE\cite{aragon2020}, NTS-KEM\cite{ntskem2019} and Classic McEliece\cite{daniel2020}. Note that the Classic McEliece published in the third round of the NIST PQC project is a merged version of NTS-KEM and the original Classic McEliece for their specifications being very similar. From Table \ref{table2} we can see that our modifications behave pretty well without using codes endowed with special algebraic structures.

\section{Conclusion}\label{section8}
In this paper, we propose two modifications for Loidreau's cryptosystem. According to our analysis, both of these two modifications can resist the existing structural attacks on Gabidulin codes based cryptosystems, including Overbeck's attack and the Coggia-Couvreur attack. In our modifications, we adopt a systematic generator matrix of the public code to reduce the public-key size. Note that this method of describing the public code may reveal some information about the plaintext because of the sparsity of the intended errors in Hamming metric\cite{canteaut1995improvements}, which means a security flaw to the cryptosystem. In the rank metric, however, the intended errors may happen in all coordinates of the error vector with high probability. Particularly, if we generate the error vector by randomly and uniformly choosing $n$ elements from an $\mathbb{F}_{q}$-subspace of $\mathbb{F}_{q^m}$ of dimension $t$, then the expected Hamming weight of the subvector of length $k$ is $k(1-\frac{1}{q^t})\sim k$, while in Hamming metric this value is $kt/n$. Therefore there is no need to worry about this problem in our modifications.

\begin{acknowledgements}
This research is supported by the National Key Research and Development Program of China (Grant No. 2018YFA0704703), the National Natural Science Foundation of China (Grant No. 61971243), the Natural Science Foundation of Tianjin (20JCZDJC00610), and the Fundamental Research Funds for the Central Universities of China (Nankai University).
\end{acknowledgements}

\end{document}